\newtheorem{theorem}{Theorem}[section]
\newtheorem{corollary}[theorem]{Corollary}
\newtheorem{lemma}[theorem]{Lemma}
\theoremstyle{definition}
\newtheorem{example}[theorem]{Example}
\theoremstyle{remark}
\newtheorem{remark}[theorem]{Remark}
\numberwithin{equation}{section}
\DeclareSymbolFontAlphabet{\amsmathbb}{AMSb}%
\DeclareMathOperator*{\argmax}{arg\,max}
\DeclareMathOperator*{\argmin}{arg\,min}
\tikzset{
    >=stealth',
    punkt/.style={
           rectangle,
           rounded corners,
           draw=black, thick,
           text width=5.5em,
           minimum height=2em,
           text centered},
    punktl/.style={
           rectangle,
           rounded corners,
           draw=black, thick,
           text width=7em,
           minimum height=2em,
           text centered},
    pil/.style={
           ->,
           shorten <=4pt,
       shorten >=4pt
    },
    pildotted/.style={
           ->,
           shorten <=4pt,
           shorten >=4pt,
  dotted,
  },
    punktf/.style={
           rectangle,
           text width=4.0em,
           minimum height=1.5em,
           text centered},
    punktfTop/.style={
           rectangle,
           text width=4.0em,
           minimum height=1.5em,
           text centered,
           append after command={
               [thick,shorten >=0.2bp, shorten <=0.2bp]
               (\tikzlastnode.north west)edge(\tikzlastnode.north east)
}
    },
    punktfBot/.style={
           rectangle,
           text width=4.0em,
           minimum height=1.5em,
           text centered,
           append after command={
               [thick,shorten >=0.2bp, shorten <=0.2bp]
               (\tikzlastnode.south west)edge(\tikzlastnode.south east)
            }
    }
}
\begin{document}

\title[Expert Kaplan--Meier estimation]{Expert Kaplan--Meier estimation}


\author{Martin Bladt}
\address{Department of Mathematical Sciences, University of Copenhagen, Universitetsparken 5, DK-2100 Copenhagen, Denmark}
\curraddr{}
\email{\href{mailto:martinbladt@math.ku.dk}{martinbladt@math.ku.dk}}
\thanks{}

\author{Christian Furrer}
\address{Department of Mathematical Sciences, University of Copenhagen, Universitetsparken 5, DK-2100 Copenhagen, Denmark}
\curraddr{}
\email{\href{mailto:furrer@math.ku.dk}{furrer@math.ku.dk}}
\thanks{}

\subjclass[2020]{62N02; 62G05; 62P05; 62G32.}

\date{}

\dedicatory{\today}

\begin{abstract}
The setting of a right-censored random sample subject to contamination is considered. In various fields, expert information is often available and used to overcome the contamination. This paper integrates expert knowledge into the product-limit estimator in two different ways with distinct interpretations. Strong uniform consistency is proved for both cases under certain assumptions on the kind of contamination and the quality of expert information, which sheds light on the techniques and decisions that practitioners may take. The nuances of the techniques are discussed -- also with a view towards semi-parametric estimation -- and they are illustrated using simulated and real-world insurance data. \\[2mm]

\noindent\textbf{Keywords:} Contamination; beliefs; right-censoring; empirical processes.

\end{abstract}

\maketitle


\section{Introduction}

In many practical areas of application, observations might not only be right-censored but also subject to various forms of contamination. In health and disability insurance, for instance, some insurance claims are yet to be settled, while some claims are falsely settled and therefore likely to later resume. The former constitutes right-censoring, while the latter is a form of contamination. In this paper, we study how to cope with such contamination via the integration of expert beliefs to classic non-parametric survival analysis.

In statistics, a \textit{belief} is a commonly encountered concept with various contrasting definitions. Bayesian inference relies on prior information specified independently of the data, and although methods to automate the selection of priors exist, such as empirical Bayes methods, the Bayesian paradigm rests on model specifications being made based on beliefs. Frequentist inference also involves choices, such as what models and tests to use, which again projects beliefs into the statistical analysis of the data. In fact, every statistical model rests on beliefs, as argued in~\cite{Brownstein2019}, where the term \textit{expert knowledge} is used, and where it is further stressed that ignoring expert knowledge incurs considerable opportunity costs. 

The inclusion of expert knowledge, or \textit{expert information}, has been considered in a range of practice areas, ranging from insurance~\cite{Tredger2016} to climate~\cite{Mach2017}. Typically, beliefs come in the form of additional data. This data may either be of lower quality than the main part of the sample or it can possess some degree of subjectivity. Intuitively, we would like to incorporate expert information whenever it is fairly objective and disregard it { if too} subjective. However, in statistical discourse the term \textit{objectivity} is contentious, see~\cite{GelmanHennig2017}. Consequently, the line between data and expert information is rather blurred.

In actuarial mathematics, expert information emanates primarily from practicing actuaries, who take additional, sometimes non-statistical, factors into account to produce guesses on different aspects of their datasets. In case an insurance claim is fraudulent, the current datapoint likely exaggerates the amount actually to be paid to the insured; actuaries may be in possession of expert information about which claims are fraudulent and to what degree. For insurance claims that are yet to be settled (or closed), say in for instance third-party liability insurance, where claims typically take a long time to close, an actuary may also provide expert information on the eventual total claim size, confer with~\cite{Bladt2020}. How the expert information is obtained, say whether a statistical method was preliminarily used, and details concerning its exact level of accuracy is often missing.

In this paper, we study a problem of contamination. In health and disability insurance, an insured claims benefits for as long as they are disabled. Once the insured recovers, the claim is considered closed, and statistical analysis may be performed with respect to the duration of disability -- for instance to determine future insurance premiums. In the industry, it has been observed that some claims considered closed might reopen, that is to say, the original recovery might be annulled, so that the current datapoint understates the actual duration; such claims and datapoints are here said to be contaminated. We should like to stress that this form of contamination is not restricted to health and disability insurance. It also appears in general insurance, where the object of interest is the size and not the duration of the claim, and in randomized clinical trials under the term \textit{incomplete event adjudication}, see~\cite{CookKosorok2004}. 

Some types of temporary disability are rather long, so the datapoints are effectively right-censored at the time of observation. The result is that the dataset is subject to not only contamination but also right-censoring, with the latter only affecting fully observed claims. The starting point for our analysis is therefore classic, non-parametric survival analysis. This implies that the methods we explore may be applied to any survival analysis scenario for which the fully observed datapoints are at risk of being understated and an expert has knowledge about which datapoints have this feature or to what extent.

To be specific, we consider three random mechanisms underlying the data generation process: the mechanism of interest, a censoring mechanism, and a contamination mechanism. The first and second components are classic in the survival analysis literature, while the third component is to be overcome by the use of expert information. To this end, we adopt the \textit{inverse probability of censoring weighted} representation of the product-limit (Kaplan--Meier) estimator and incorporate judgments in the form of binary variables or kernel specifications. We provide conditions for strong consistency on compacts of the modified product-limit estimators, thus validating them as useful vehicles for a sound transfer of beliefs into the statistical framework.

The remainder of the paper is structured as follows. In Section~\ref{sec:km}, we review the definitions leading up to the Kaplan--Meier estimator, culminating in the retrieval of its inverse probability of censoring weighted representation. Section~\ref{sec:expert} presents the theoretical contributions of the paper: defining the expert Kaplan--Meier estimators, discovering suitable conditions on the quality of expert information for strong convergence on compacts, and comparing the resulting estimation procedures. The theory of empirical processes plays a central role and leads to a transparent treatment. Section~\ref{sec:ext} is devoted to extensions towards semi-parametric estimation, in particular concerning tail behavior, while we also discuss the introduction of covariates. Numerical studies for both simulated and real data are presented in Section~\ref{sec:num}, showcasing the practical potential of the presented methodology in real-life situations. {Section~\ref{sec:con} concludes.}

\section{Kaplan--Meier estimator}\label{sec:km}

In this section, we give a brief introduction to the celebrated Kaplan--Meier estimator for right-censored data while adopting some nomenclature from actuarial science, which constitutes one of our main areas of application. Furthermore, as a conceptual as well as technical prelude to upcoming arguments, we recall an alternative representation of the Kaplan--Meier estimator as an inverse probability of censoring weighted average, and we recall the usage of empirical processes and cumulative hazard functions in proving strong consistency on compacts.

Let $(\Omega,\mathcal{F},\amsmathbb{P})$ be a background probability space, and let $X$ be a random variable with values in $[0,\infty)$. We think of $X$ as either the size or the duration of an insurance claim. In practice, observation of $X$ is subject to right-censoring. In other words, we observe
\begin{align}\label{eq:W_delta_classic}
W = X \wedge C, \quad \delta = \mathds{1}_{\{W = X\}},
\end{align}
where $C$ is another random variable with values in $[0,\infty]$ describing right-censoring. If $\delta = 1$, then the claim is said to be \textit{closed}, while if $\delta = 0$, then the claim is said to be \textit{open}. Denote by $F$ the cumulative distribution function of $X$, by $G$ the cumulative distribution function of $C$, and by $H$ the cumulative distribution function of $W$. We assume that $F$ is non-degenerate, while $G$ might even be degenerate at $+\infty$, which corresponds to no right-censoring.

Let $(X_i,C_i)_{i=1}^n$ be iid replicates of $(X,C)$, and define $W_i$ and $\delta_i$ according to~\eqref{eq:W_delta_classic}. The product-limit estimator {$\amsmathbb{F}^{(n)}$} of $F$ is given by
\begin{align*}
1 - \amsmathbb{F}^{(n)}(t)
=
1 - \amsmathbb{L}\big(t;(W_i,\delta_i)_{i=1}^n\big) = \prod_{W_{n:i} \leq t} \bigg(1 - \frac{\delta_{n:i}}{n-i+1} \bigg), \quad t\geq 0,
\end{align*}
where $0 \leq W_{n:1} \leq \cdots \leq W_{n:n}$ and $\delta_{n:i}$ are the corresponding $\delta$'s. In case of ties, closed claims ($\delta_i = 1$) are taken to precede open claims ($\delta_i = 0$). The above non-parametric maximum likelihood estimator was introduced by Kaplan and Meier in their seminal paper~\cite{KaplanMeier1958}, and is, accordingly, also called the Kaplan--Meier estimator. In the following, we rely on the comprehensive yet succinct presentation found in Chapter~7 of~\cite{ShorackWellner2009}.

Under the assumption of \textit{entirely random} right-censoring, that is, the assumption that $X$ and $C$ are independent, the Kaplan--Meier estimator is consistent in the following sense:
\begin{align}\label{eq:km_classic_cons}
\forall \theta < H^{-1}(1): \sup_{0 \leq t \leq \theta} \big| \amsmathbb{F}^{(n)}(t) - F(t) \big| \overset{\text{a.s.}}{\to} 0, \quad n \to \infty,
\end{align}
see for instance the proof of Theorem 7.3.1 in~\cite{ShorackWellner2009}. Note that the last part of the proof, which extends~\eqref{eq:km_classic_cons} from $0\leq t \leq \theta$ for any $\theta < H^{-1}(1)$ to $0\leq t < H^{-1}(1)$ is erroneous. This appears to have been first noted in~\cite{Wang1987} and later resolved
in~\cite{StuteWang1993}; see also the Errata of~\cite{ShorackWellner2009} and Gill's account starting from p.\ 151 in~\cite{BakryGillMolchanov1994}.

To prove~\eqref{eq:km_classic_cons}, it is convenient to cast the Kaplan--Meier estimator according to
\begin{align}\label{eq:F_emp}
\amsmathbb{F}^{(n)}(t)
&=
\int_{[0,t]} \big(1 - \amsmathbb{F}^{(n)}(s-)\big) \, \mathbb{\Lambda}^{(n)}(\mathrm{d}s), \\ \nonumber
\mathbb{\Lambda}^{(n)}(t)
&=
\int_{[0,t]} \frac{1}{1 - \amsmathbb{F}^{(n)}(s-)} \, \amsmathbb{F}^{(n)}(\mathrm{d}s), \quad t\geq0,
\end{align}
which are empirical equivalents of
\begin{align*}
F(t) &= \int_{[0,t]} \big(1 - F(s-)\big) \, \Lambda(\mathrm{d}s), \\
\Lambda(t) &= \int_{[0,t]} \frac{1}{1 - F(s-)} \, F(\mathrm{d}s), \quad t\geq0,
\end{align*}
where $\Lambda$ is the cumulative hazard function corresponding to $F$. Under entirely random right-censoring, we have that $(1 - F)(1 - G)=(1-H)$, so
\begin{align}\label{eq:lambda_entirely_random}
\Lambda(t) = \int_{[0,t]} \frac{1}{1 - H(s-)} \, H_1(\mathrm{d}s), \quad 0 \leq t < G^{-1}(1),
\end{align}
for $H_1(t) = \amsmathbb{P}(W \leq t, W = X)$. Referring to Exercise 7.2.2 in~\cite{ShorackWellner2009}, it also holds pathwise that
\begin{align}\label{eq:lambda_emp_H}
\mathbb{\Lambda}^{(n)}(t) = \int_{[0,t]} \frac{1}{1 - \amsmathbb{H}^{(n)}(s-)} \, \amsmathbb{H}^{(n)}_1(\mathrm{d}s),
\end{align}
where the empirical estimators $\amsmathbb{H}^{(n)}$ and $\amsmathbb{H}^{(n)}_1$ of $H$ and $H_1$ are given by
\begin{align*}
\amsmathbb{H}^{(n)}(t) &= \frac{1}{n} \sum_{i=1}^n \mathds{1}_{\{W_i \leq t\}}, \\
\amsmathbb{H}^{(n)}_1(t) &=\frac{1}{n} \sum_{i=1}^n \mathds{1}_{\{W_i \leq t\}}\delta_i
\end{align*}
for $t\geq0$. Consistency of the Kaplan--Meier estimator under entirely random right-censoring now essentially follows from the rather obvious consistency of the empirical estimators $\amsmathbb{H}^{(n)}$ and $\amsmathbb{H}^{(n)}_1$ and the more involved fact that, pathwise,
\begin{align}\label{eq:gill_rep}
\frac{1-\amsmathbb{F}^{(n)}(t)}{1-F(t)} = 1 - \int_{[0,t]} \frac{1 - \amsmathbb{F}^{(n)}(s-)}{1 - F(s)} \, \big(\mathbb{\Lambda}^{(n)} - \Lambda\big)(\mathrm{d}s), \quad 0 \leq t < F^{-1}(1),
\end{align}
see Proposition 7.2.1 in~\cite{ShorackWellner2009}. Note that entirely random right-censoring is only required to link $\Lambda$ to $H$ and $H_1$, confer with~\eqref{eq:lambda_entirely_random}.

The Kaplan--Meier estimator {$\amsmathbb{G}^{(n)}$} of $G$ is given by
\begin{align*}
1 - \amsmathbb{G}^{(n)}(t)
=
1 - \amsmathbb{L}'\big(t;(W_i,1-\delta_i)_{i=1}^n\big) = \prod_{W_{n:i} \leq t} \bigg(1 - \frac{1 - \delta_{n:i}}{n-i+1} \bigg), \quad t\geq 0,
\end{align*}
where in case of ties, closed claims $(1 - \delta_i = 0)$ are taken to precede open claims $(1 - \delta_i =1)$. Note that in case of no ties, it holds that
\begin{align*}
\amsmathbb{L}'\big(t;(W_i,1-\delta_i)_{i=1}^n\big) = \amsmathbb{L}\big(t;(W_i,1-\delta_i)_{i=1}^n\big).
\end{align*}
In any case, we should like to stress that under entirely random right-censoring, this estimator is also consistent:
\begin{align}\label{eq:km_censored_cons}
\forall \theta < H^{-1}(1): \sup_{0 \leq t \leq \theta} \big| \amsmathbb{G}^{(n)}(t) - G(t) \big| \overset{\text{a.s.}}{\to} 0, \quad n \to \infty.
\end{align}
Similar to~\eqref{eq:lambda_entirely_random}, under entirely random right-censoring it holds that
\begin{align}\label{eq:ipcw_rep}
F(t) = \int_{[0,t]} \frac{1}{1-G(s-)} \, H_1(\mathrm{d}s), \quad 0 \leq t < G^{-1}(1).
\end{align}
There actually exists a comparable representation of the Kaplan--Meier estimator $\amsmathbb{F}^{(n)}$ in terms of the Kaplan--Meier estimator of $G$ as well as $\amsmathbb{H}^{(n)}_1$. The argument goes as follows. Referring to Exercise~7.2.2 in~\cite{ShorackWellner2009}, one has the following pathwise identity:
\begin{align*}
\big(1-\amsmathbb{F}^{(n)}\big)\big(1-\amsmathbb{G}^{(n)}\big) = 1 - \amsmathbb{H}^{(n)},
\end{align*}
so that~\eqref{eq:lambda_emp_H} in combination with~\eqref{eq:F_emp} yields
\begin{align}\label{eq:emp_ipcw_rep}
\amsmathbb{F}^{(n)}(t) = \int_{[0,t]} \frac{1}{1-\amsmathbb{G}^{(n)}(s-)} \, \amsmathbb{H}_1^{(n)}(\mathrm{d}s) = \frac{1}{n} \sum_{i=1}^n \frac{\mathds{1}_{\{W_i \leq t\}}\delta_i}{1-\amsmathbb{G}^{(n)}(W_i-)},
\end{align}
which casts the Kaplan--Meier estimator as a so-called \textit{inverse probability of censoring weighted} average. {See Section~3.3 in~\cite{LaanRobins2003} for further details on inverse probability of censoring weighting.}

\section{Expert Kaplan--Meier estimators}\label{sec:expert}

For a range of applications, right-censored data is subject to a form of contamination. Let us again think of $X$ as either the size or the duration of an insurance claim and of $\delta$ as indicating whether an insurance claim is closed or open. In for instance motor third-party liability insurance and both worker's compensation and loss of income coverage, claims might potentially reopen, see for instance Subsections~K2-K3 in~\cite{Manual1997}. If this form of contamination is not taken into account, insurers might significantly misvalue their obligations. Broadly speaking, the potential of claims to reopen constitutes a form of incomplete event adjudication, which has received some attention in the context of randomized clinical trials, see~\cite{CookKosorok2004}.

We now extend the setup from Section~\ref{sec:km} to allow for contamination due to incomplete event adjudication. Let $Y$ be another random variable with values in $[0,\infty]$, and suppose that we observe only
\begin{align}\label{eq:W_delta_contaminated}
W = (X \wedge Y) \wedge C = X \wedge (Y \wedge C), \quad \delta = \mathds{1}_{\{W = X \wedge Y\}}.
\end{align}
In other words, $X$ is observed subject to both contamination and right-censoring. It should be noted that $Y$ is often somewhat hypothetical, as is also the case for ordinary competing risks models. We say that contaminated right-censoring is \textit{entirely random} if $X$ is independent of $Y \wedge C$. We say that both contamination and right-censoring are \textit{entirely random} if $X$, $Y$, and $C$ are mutually independent. Furthermore, if $\delta = 1$ and $X \leq Y$, then the claim is said to be \textit{truly closed}, while if $\delta = 1$ but $X > Y$, then the claim is said to be \textit{falsely closed}. The claim is still said to be \textit{open} if $\delta = 0$. See Figure~\ref{fig:cont_scheme} for a schematic representation of the contamination and censoring setup. The nomenclature hints at the main motivation for the present setup, namely insurance claims that might potentially reopen.

\begin{figure}[!htbp]
\centering
\includegraphics[width=0.7\textwidth,trim= 2in 2in 6in 2in,clip]{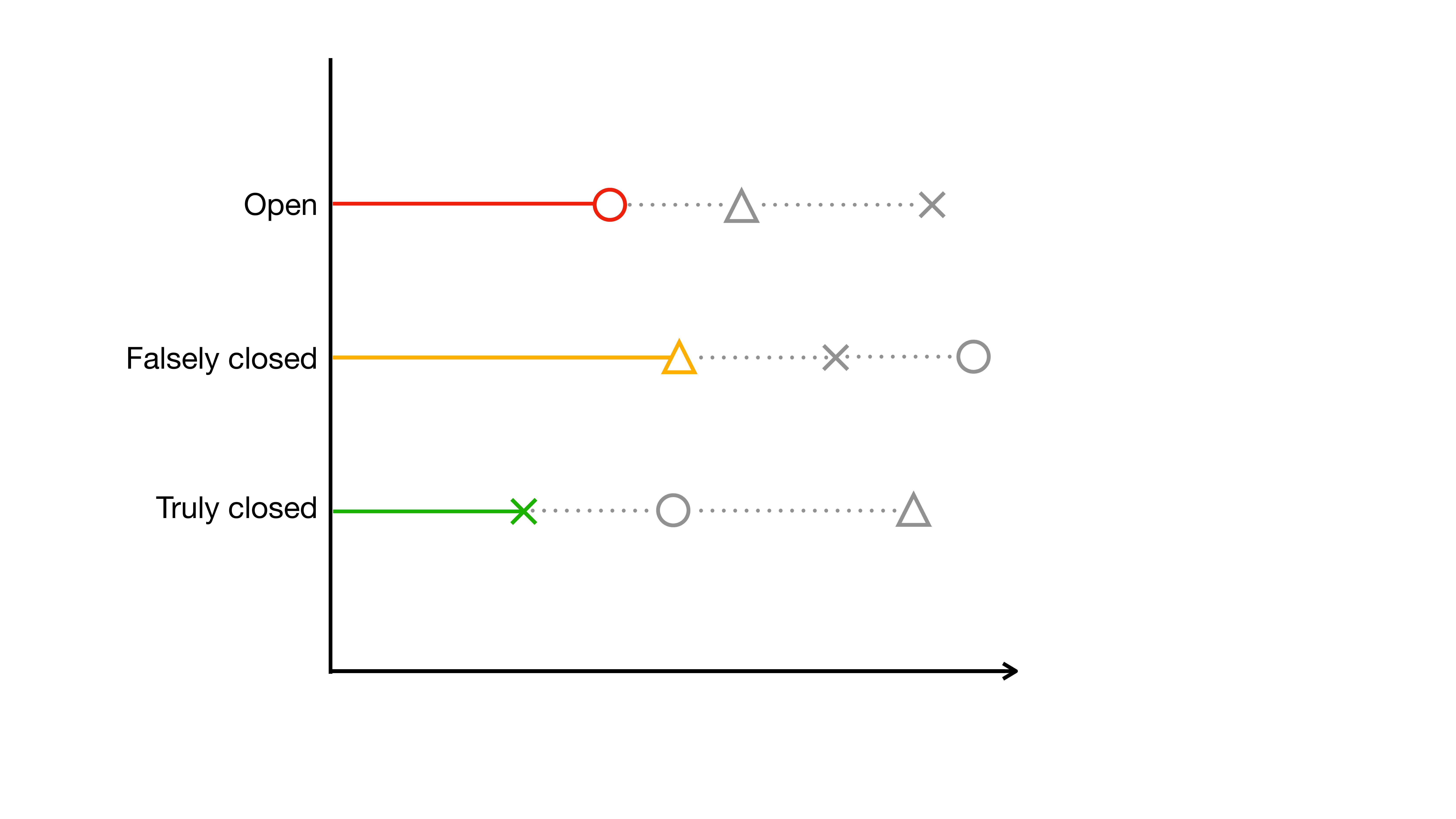}
\caption{Schematic representation of right-censored data subject to contamination. {The right-censoring mechanism is represented by a circle, the contamination mechanism by a triangle, and the true event of interest by a cross.}}
\label{fig:cont_scheme}
\end{figure}

Let $(X_i,Y_i,C_i)_{i=1}^n$ be iid replicates of $(X,Y,C)$, and define $W_i$ and $\delta_i$ according to~\eqref{eq:W_delta_contaminated}. The usual Kaplan--Meier estimator $t \mapsto \amsmathbb{L}(t; (W_i,\delta_i)_{i=1}^n)$ disregards the effect of contamination and thus -- unsurprisingly -- systematically overestimates the actual cumulative distribution function $F$ of $X$. But of course one can hardly do better without additional knowledge about the nature and extent of contamination. The alternative $\amsmathbb{L}(t; (W_i, \mathds{1}_{\{W_i = X_i\}})_{i=1}^n)$ is not viable since the indicator $\mathds{1}_{\{W_i = X_i\}}$ is not observed.

In the following, we develop non-parametric estimators that include expert information about exactly the nature and extent of contamination, which -- for insurance applications -- may be provided by caseworkers or claim analysts. {We do not make any assumptions on which data or even models the experts base their judgments on. This contrasts any related approach that would handle contamination by an explicit model for the reopening of claims, including so-called cure models; a recent survey on cure models is~\cite{AmicoKeilegom2018}.

We proceed as follows.} Subsection~\ref{sec:crude} is devoted to the crude expert, who provides judgments at $\mathds{1}_{\{W_i = X_i\}}$, while Subsection~\ref{sec:sophisticated} concerns the sophisticated expert, who provides beliefs regarding $t \mapsto \mathds{1}_{\{X_i \leq t\}}$. {Intuitively speaking, the crude expert selects truly closed claims among all truly and falsely closed claims, while the sophisticated expert estimates the true claim size or duration for all truly and falsely closed claims. In particular, the decision making process of the sophisticated expert must be more complex but, on the other hand, also significantly more informative.} In both cases, we establish strong consistency on compacts subject to suitable conditions on the quality of expert information. Finally, Subsection~\ref{sec:compare} contains a comparison of the two experts and the associated non-parametric estimators.

\subsection{Crude expert}\label{sec:crude}

The crude expert, which we refer to using the subscript $\dagger$, provides expert judgments $(\eta_i^{(n)})_{i=1}^n$, where $\eta_i^{(n)} \in \{0,1\}$ is their best judgment at $\mathds{1}_{\{W_i = X_i\}}$. The expert judgments may be viewed as a triangular array, and in that spirit, we assume that for every $n$ the variables $\eta_1^{(n)}, \ldots, \eta_n^{(n)}$ are mutually independent but not necessarily identically distributed.

Define the crude expert Kaplan--Meier estimator $\amsmathbb{F}_{\dagger}^{(n)}$ according to
\begin{align*}
\amsmathbb{F}_{\dagger}^{(n)}(t) = \amsmathbb{L}\big(t;(W_i,\eta_i^{(n)})_{i=1}^n\big), \quad t \geq0,
\end{align*}
or, equivalently,
\begin{align*}
\amsmathbb{F}_{\dagger}^{(n)}(t) = \frac{1}{n}\sum_{i=1}^n \frac{\mathds{1}_{\{W_i \leq t\}} \eta_i^{(n)}}{1 - \amsmathbb{L}'\big(W_i-;(W_i,1-\eta_i^{(n)})_{i=1}^n\big)}.
\end{align*}
This estimator is based on the ideal yet unavailable Kaplan--Meier estimator $t \mapsto \amsmathbb{L}(t; (W_i, \mathds{1}_{\{W_i = X_i\}})_{i=1}^n)$, but with the unobservables $(\mathds{1}_{\{W_i = X_i\}})_{i=1}^n$ replaced by the expert judgments $(\eta_i^{(n)})_{i=1}^n$. {Any reasonable expert would set $\eta_i^{(n)} \leq \delta_i$, in which case the usual Kaplan--Meier estimator, which corresponds to setting $\eta_i^{(n)} = \delta_i$, constitutes an upper bound for $\amsmathbb{F}_{\dagger}^{(n)}$.}

From~\eqref{eq:gill_rep}, it holds pathwise that
\begin{align}\label{eq:F_dagger_representation}
\frac{1-\amsmathbb{F}_{\dagger}^{(n)}(t)}{1-F(t)}
&=
1-\int_{[0,t]} \frac{1-\amsmathbb{F}_{\dagger}^{(n)}(s-)}{1-F(s)}\big(\mathbb{\Lambda}_{\dagger}^{(n)} - \Lambda\big)(\mathrm{d}s), \quad 0 \leq t < F^{-1}(1), \\ \label{eq:Lambda_dagger_representation}
\mathbb{\Lambda}_{\dagger}^{(n)}(t)
&=
\int_{[0,t]} \frac{1}{1-\amsmathbb{H}^{(n)}(s-)} \amsmathbb{H}_{\dagger,1}^{(n)}(\mathrm{d}s), \quad t \geq 0,
\end{align}
where $\amsmathbb{H}_{\dagger,1}^{(n)}$ is given by
\begin{align*}
\amsmathbb{H}_{\dagger,1}^{(n)}(t) = \frac{1}{n} \sum_{i=1}^n \mathds{1}_{\{W_i \leq t\}} \eta_i^{(n)}, \quad t\geq0.
\end{align*}
\begin{lemma}\label{lem:crude_expert}
Suppose that
\begin{align}\label{eq:crude_assumption}
\sup_{0 \leq t < \infty} \bigg|
\frac{1}{n}
\sum_{i=1}^n \amsmathbb{E}\big[\mathds{1}_{\{W_i \leq t\}} \eta_i^{(n)}\big] - H_1(t)\bigg| \to 0, \quad n \to \infty.
\end{align}
It then holds that
\begin{align*}
\sup_{0 \leq t < \infty}\big| \amsmathbb{H}_{\dagger,1}^{(n)}(t) - H_1(t) \big| \overset{\text{a.s.}}{\to} 0, \quad n \to \infty.
\end{align*}
\end{lemma}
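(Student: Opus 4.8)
The plan is to separate the deterministic and stochastic contributions. Writing $m_n(t) = \frac1n\sum_{i=1}^n \amsmathbb{E}\big[\mathds{1}_{\{W_i\leq t\}}\eta_i^{(n)}\big]$, one has
\begin{align*}
\amsmathbb{H}_{\dagger,1}^{(n)}(t) - H_1(t) = \big(\amsmathbb{H}_{\dagger,1}^{(n)}(t) - m_n(t)\big) + \big(m_n(t) - H_1(t)\big),
\end{align*}
and the supremum over $t\geq 0$ of the absolute value of the second summand tends to $0$ by the hypothesis~\eqref{eq:crude_assumption} (note in passing that $\amsmathbb{E}[\mathds{1}_{\{W_i\leq t\}}\mathds{1}_{\{W_i=X_i\}}]=H_1(t)$ for every $i$, so~\eqref{eq:crude_assumption} is precisely the statement that, on average, the crude expert's flags agree with the unobservable truly-closed indicators). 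Everything thus reduces to a uniform strong law of large numbers, $\sup_{t\geq0}\big|\amsmathbb{H}_{\dagger,1}^{(n)}(t)-m_n(t)\big|\overset{\text{a.s.}}{\to}0$, for the row-wise independent summands $\mathds{1}_{\{W_i\leq t\}}\eta_i^{(n)}$, which are $[0,1]$-valued and non-decreasing in $t$.

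For this I would run the classical Glivenko--Cantelli argument, taking care that $(\eta_1^{(n)},\ldots,\eta_n^{(n)})$ is a triangular array and not an iid sequence. Fix $\varepsilon>0$. Since $H_1$ is bounded, non-decreasing and right-continuous with a limit at $+\infty$, pick a finite grid $0=t_0<t_1<\cdots<t_{K}=+\infty$ with $H_1(t_j-)-H_1(t_{j-1})\leq\varepsilon$ for $j=1,\ldots,K$ (the usual quantile construction, with the convention that the last interval is $[t_{K-1},\infty)$). Write $\rho_n=\sup_{t\geq0}|m_n(t)-H_1(t)|\to 0$. For $n$ with $\rho_n\leq\varepsilon$ the same grid satisfies $m_n(t_j-)-m_n(t_{j-1})\leq 3\varepsilon$, so sandwiching an arbitrary $t\in[t_{j-1},t_j)$ between grid points and using monotonicity of both $\amsmathbb{H}_{\dagger,1}^{(n)}$ and $m_n$ gives, for such $n$,
\begin{align*}
\sup_{t\geq0}\big|\amsmathbb{H}_{\dagger,1}^{(n)}(t)-m_n(t)\big| \leq 3\varepsilon + \max_{0\leq j\leq K}\Big(\big|\amsmathbb{H}_{\dagger,1}^{(n)}(t_j)-m_n(t_j)\big|\vee\big|\amsmathbb{H}_{\dagger,1}^{(n)}(t_j-)-m_n(t_j-)\big|\Big).
\end{align*}
At each of the finitely many fixed arguments $t_j$ and $t_j-$, the quantity $\amsmathbb{H}_{\dagger,1}^{(n)}(\cdot)-m_n(\cdot)$ is an average of $n$ independent mean-zero random variables with range in an interval of length $1$, so Hoeffding's inequality bounds by $2e^{-2n\varepsilon^2}$ the probability that it exceeds $\varepsilon$ in absolute value; this is summable in $n$, so a union bound over the grid together with the Borel--Cantelli lemma shows the maximum above is eventually $\leq\varepsilon$ almost surely. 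Hence almost surely $\limsup_n\sup_{t\geq0}\big|\amsmathbb{H}_{\dagger,1}^{(n)}(t)-m_n(t)\big|\leq 4\varepsilon$, and intersecting the corresponding almost-sure events along $\varepsilon\downarrow0$ yields the uniform strong law; combined with the deterministic term this proves the lemma.

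The only genuinely delicate point is the one I have flagged: because the $\eta_i^{(n)}$ form a triangular array of non-identically distributed variables, one cannot simply quote the classical Glivenko--Cantelli theorem, and must instead check that a single grid, chosen independently of $n$ via the fixed function $H_1$, can be used for the concentration step uniformly in $n$ — which is exactly where~\eqref{eq:crude_assumption} enters, by controlling the $n$-dependent mean functions $m_n$ by $H_1$. The remaining bookkeeping (the left-limit terms, the tail interval $[t_{K-1},\infty)$, and the passage $\varepsilon\downarrow0$) is routine.
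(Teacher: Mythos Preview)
Your proof is correct and shares the paper's top-level decomposition into the deterministic bias $m_n-H_1$ (handled by~\eqref{eq:crude_assumption}) and the stochastic fluctuation $\amsmathbb{H}_{\dagger,1}^{(n)}-m_n$. The difference lies only in how the uniform strong law for the fluctuation term is obtained. You run the Glivenko--Cantelli machinery directly on the monotone maps $t\mapsto\mathds{1}_{\{W_i\leq t\}}\eta_i^{(n)}$: a single grid is chosen from the fixed function $H_1$ (and transferred to $m_n$ via~\eqref{eq:crude_assumption}), and Hoeffding's inequality plus Borel--Cantelli supplies the pointwise concentration for the triangular array. This is self-contained and would work verbatim for $[0,1]$-valued rather than binary expert weights. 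The paper instead exploits that $\eta_i^{(n)}\in\{0,1\}$ to write, for $t\geq 0$, $\mathds{1}_{\{W_i\leq t\}}\eta_i^{(n)}=\mathds{1}_{\{W_i\eta_i^{(n)}\leq t\}}-(1-\eta_i^{(n)})$, thereby reducing the fluctuation term to a citation of the triangular-array Glivenko--Cantelli theorem for the empirical distribution of $(W_i\eta_i^{(n)})_{i=1}^n$ plus a triangular-array strong law for the bounded scalars $\eta_i^{(n)}$. The paper's route is shorter because it outsources both steps to known theorems; yours is more elementary and slightly more general in scope.
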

\begin{proof}
We immediately find that
\begin{align*}
\sup_{0 \leq t < \infty}\big| \amsmathbb{H}_{\dagger,1}^{(n)}(t)  - H_1(t) \big|
&\leq
\sup_{0 \leq t < \infty} \bigg| \frac{1}{n} \sum_{i=1}^n \mathds{1}_{\{W_i \leq t\}} \eta_i^{(n)} - \frac{1}{n} \sum_{i=1}^n \amsmathbb{E}\big[\mathds{1}_{\{W_i \leq t\}} \eta_i^{(n)}\big] \bigg| \\
&\quad +
\sup_{0 \leq t < \infty} \bigg| \frac{1}{n} \sum_{i=1}^n \amsmathbb{E}\big[\mathds{1}_{\{W_i \leq t\}} \eta_i^{(n)}\big] - H_1(t) \bigg|.
\end{align*}
The last term converges to zero as $n \to \infty$ due to~\eqref{eq:crude_assumption}. Concerning the other term, we note that
\begin{align*}
&\sup_{0 \leq t < \infty} \bigg| \frac{1}{n} \sum_{i=1}^n \mathds{1}_{\{W_i \leq t\}} \eta_i^{(n)} - \frac{1}{n} \sum_{i=1}^n \amsmathbb{E}\big[\mathds{1}_{\{W_i \leq t\}} \eta_i^{(n)}\big] \bigg| \\
&\leq
\sup_{0 \leq t < \infty} \bigg| \frac{1}{n} \sum_{i=1}^n \mathds{1}_{\{W_i\eta_i^{(n)} \leq t\}} - \frac{1}{n} \sum_{i=1}^n \amsmathbb{P}\big(W_i \eta_i^{(n)} \leq t\big) \bigg|
+
\bigg| \frac{1}{n} \sum_{i=1}^n \eta_i^{(n)} - \amsmathbb{E}\big[\eta_i^{(n)}\big] \bigg|.
\end{align*}
The first term on the second line converges almost surely to zero as $n\to\infty$ according to the Glivenko--Cantelli theorem for triangular arrays, confer for instance with Theorem~3.2.1 in~\cite{ShorackWellner2009}. Furthermore, since $|\eta_i^{(n)}| \leq 1$,  the second term also converges almost surely to zero according to the strong law of large numbers for triangular arrays. Collecting results completes the proof.
\end{proof}
\begin{remark}
If the expert judgments are identically distributed, then~\eqref{eq:crude_assumption} reads
\begin{align*}
\sup_{0\leq t < \infty} \Big| \amsmathbb{E}\big[\mathds{1}_{\{W_1 \leq t\}} \eta_1^{(n)}\big] - H_1(t)\Big| \to 0, \quad n \to \infty.
\end{align*}
\end{remark}
\begin{remark}
A sufficient but not necessary condition for~\eqref{eq:crude_assumption} to hold is
\begin{align*}
\sup_{1 \leq i \leq n} 
\amsmathbb{E}\big[\big|\eta_i^{(n)} - \mathds{1}_{\{W_i = X_i\}}\big|\big] \to 0, \quad n \to \infty.
\end{align*}
To see this, note that
\begin{align*}
&\sup_{0 \leq t < \infty} \bigg| \frac{1}{n}\sum_{i=1}^n \amsmathbb{E}\big[\mathds{1}_{\{W_i \leq t\}} \eta_i^{(n)}\big] - H_1(t)\bigg| \\
&=
\sup_{0 \leq t < \infty} \bigg| \frac{1}{n} \sum_{i=1}^n \amsmathbb{E}\big[\mathds{1}_{\{W_i \leq t\}} \big(\eta_i^{(n)} - \mathds{1}_{\{W_i = X_i\}}\big)\big]\bigg| \\
&\leq
\frac{1}{n} \sum_{i=1}^n \amsmathbb{E}\big[\big|\eta_i^{(n)} - \mathds{1}_{\{W_i = X_i\}}\big|\big] \\
&\leq
\sup_{1 \leq i \leq n} \amsmathbb{E}\big[\big|\eta_i^{(n)} - \mathds{1}_{\{W_i = X_i\}}\big|\big].
\end{align*}
\end{remark}
Suppose for an instance that both contamination and right-censoring are entirely random, that is, suppose that $X$, $Y$, and $C$ are mutually independent. Inspired by the condition~\eqref{eq:crude_assumption}, it could be that the crude expert provides judgments based on their learnings about the following conditional expectation:
\begin{align}\label{eq:p_function}
\amsmathbb{E}\big[\mathds{1}_{\{W = X\}} \, \big| \, \delta, W \big]
=
\delta\amsmathbb{E}\big[\mathds{1}_{\{X \leq Y\}} \, \big| \, X \wedge Y] = \delta p(X \wedge Y).
\end{align}
This is somewhat in the spirit of~\cite{CookKosorok2004}, where incomplete event adjudication is studied. {Th}ere it is essentially suggested to first learn $p$ to form an estimator $\mathbb{p}^{(n)}$, and, then, to learn $F$ via the judgments
\begin{align*}
\eta_i^{(n)} =
\delta_i \, \mathbb{p}^{(n)}(X_i \wedge Y_i) \in [0,1].
\end{align*}
{Different from~\cite{CookKosorok2004}, our setup uses a competing risks representation of the contamination mechanism with rather explicit independence assumptions. Differences in assumptions are merely the result of using different but essentially equivalent representations of the data generating process.}

The next result establishes strong consistency on compacts for the crude expert Kaplan--Meier estimator subject to~\eqref{eq:crude_assumption}.
\begin{theorem}[Strong uniform consistency of the crude expert]\label{thm:cons_crude}
Suppose contaminated right-censoring is entirely random. Let $0\leq\theta<H^{-1}(1)$. Under~\eqref{eq:crude_assumption} it then holds that
\begin{align*}
&\sup_{0 \leq t \leq \theta}\big| \mathbb{\Lambda}_{\dagger}^{(n)}(t) - \Lambda(t) \big| \overset{\text{a.s.}}{\to} 0, \quad n \to \infty, \\
&\sup_{0 \leq t \leq \theta}\big| \amsmathbb{F}_\dagger^{(n)}(t) - F(t) \big| \overset{\text{a.s.}}{\to} 0, \quad n \to \infty.
\end{align*}
\end{theorem}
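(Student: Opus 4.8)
The plan is to mirror the classical consistency proof of the Kaplan--Meier estimator as presented in Chapter~7 of~\cite{ShorackWellner2009}, but now driving everything through the weighted empirical subdistribution $\amsmathbb{H}_{\dagger,1}^{(n)}$ in place of $\amsmathbb{H}_1^{(n)}$. The key observation is that Lemma~\ref{lem:crude_expert} already supplies strong uniform consistency of $\amsmathbb{H}_{\dagger,1}^{(n)}$ for $H_1$ on all of $[0,\infty)$, while the ordinary Glivenko--Cantelli theorem gives $\sup_{t}|\amsmathbb{H}^{(n)}(t) - H(t)| \overset{\text{a.s.}}{\to} 0$. The assumption that contaminated right-censoring is entirely random, i.e.\ that $X$ is independent of $Y\wedge C$, is exactly what is needed so that the identity $(1-F)(1-\widetilde{G}) = (1-H)$ holds with $\widetilde{G}$ the distribution function of $Y\wedge C$, and hence, as in~\eqref{eq:lambda_entirely_random}, so that $\Lambda(t) = \int_{[0,t]} (1-H(s-))^{-1}\, H_1(\mathrm{d}s)$ for $0 \le t < \widetilde{G}^{-1}(1)$. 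Note that here $H_1(t) = \amsmathbb{P}(W\le t,\, W = X)$; the content of~\eqref{eq:crude_assumption} is precisely that the expert judgments reproduce \emph{this} $H_1$ in the limit, which is the crux making the whole scheme work.

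First I would establish convergence of $\mathbb{\Lambda}_{\dagger}^{(n)}$ to $\Lambda$ uniformly on $[0,\theta]$. Fix $\theta < H^{-1}(1)$, so that $1 - H(\theta) > 0$ and thus $\inf_{0\le s\le\theta}(1 - \amsmathbb{H}^{(n)}(s-)) \ge 1 - \amsmathbb{H}^{(n)}(\theta)$ is eventually bounded away from $0$, almost surely, by the Glivenko--Cantelli consistency of $\amsmathbb{H}^{(n)}$. On this event one writes, from~\eqref{eq:Lambda_dagger_representation} and the analogous expression for $\Lambda$,
\begin{align*}
\mathbb{\Lambda}_{\dagger}^{(n)}(t) - \Lambda(t)
&= \int_{[0,t]} \frac{1}{1-\amsmathbb{H}^{(n)}(s-)} \, \big(\amsmathbb{H}_{\dagger,1}^{(n)} - H_1\big)(\mathrm{d}s)
+ \int_{[0,t]} \bigg(\frac{1}{1-\amsmathbb{H}^{(n)}(s-)} - \frac{1}{1-H(s-)}\bigg) H_1(\mathrm{d}s).
\end{align*}
The second integral is bounded in absolute value by $\sup_{0\le s\le\theta}\big|(1-\amsmathbb{H}^{(n)}(s-))^{-1} - (1-H(s-))^{-1}\big|$, which vanishes almost surely because the denominators converge uniformly and are uniformly bounded away from $0$ on $[0,\theta]$. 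For the first integral I would integrate by parts (as in Exercise~7.2.1 / the proof of Theorem~7.3.1 in~\cite{ShorackWellner2009}) to move the difference $\amsmathbb{H}_{\dagger,1}^{(n)} - H_1$ outside and bound the resulting expression by $\sup_{0\le s\le\theta}|\amsmathbb{H}_{\dagger,1}^{(n)}(s) - H_1(s)|$ times the total variation of $s\mapsto (1-\amsmathbb{H}^{(n)}(s-))^{-1}$ on $[0,\theta]$ (together with a boundary term); the supremum goes to $0$ almost surely by Lemma~\ref{lem:crude_expert}, and the total variation stays bounded on the good event since the integrand is monotone and bounded there. This yields $\sup_{0\le t\le\theta}|\mathbb{\Lambda}_{\dagger}^{(n)}(t) - \Lambda(t)| \overset{\text{a.s.}}{\to} 0$.

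Next I would pass from $\mathbb{\Lambda}_{\dagger}^{(n)}$ to $\amsmathbb{F}_\dagger^{(n)}$ using the pathwise representation~\eqref{eq:F_dagger_representation}, which is the exact analogue of Gill's identity~\eqref{eq:gill_rep}. Writing $R^{(n)}(t) = (1-\amsmathbb{F}_\dagger^{(n)}(t))/(1-F(t))$, equation~\eqref{eq:F_dagger_representation} expresses $R^{(n)}$ as $1$ minus an integral of $R^{(n)}(s-)$ against $\mathbb{\Lambda}_{\dagger}^{(n)} - \Lambda$; a Gronwall-type argument (again following the proof of Theorem~7.3.1) bounds $\sup_{0\le t\le\theta}|R^{(n)}(t) - 1|$ in terms of $\sup_{0\le t\le\theta}|\mathbb{\Lambda}_{\dagger}^{(n)}(t) - \Lambda(t)|$, which we have just shown tends to $0$ almost surely. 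Since $1 - F(\theta) > 0$ (as $\theta < H^{-1}(1) \le F^{-1}(1)$), uniform closeness of $R^{(n)}$ to $1$ on $[0,\theta]$ translates into $\sup_{0\le t\le\theta}|\amsmathbb{F}_\dagger^{(n)}(t) - F(t)| \overset{\text{a.s.}}{\to} 0$, completing the proof.

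The main obstacle is the integration-by-parts / Gronwall bookkeeping in the last two steps: one must be careful that the total variation of $s\mapsto(1-\amsmathbb{H}^{(n)}(s-))^{-1}$ and the factor $R^{(n)}(s-)$ are controlled \emph{uniformly in $n$ on the relevant almost-sure event}, and that the endpoint $\theta$ is strictly below $H^{-1}(1)$ so the denominators do not degenerate. This is precisely the subtlety behind the known erratum in~\cite{ShorackWellner2009} concerning the extension to $t < H^{-1}(1)$; here we sidestep it entirely by fixing $\theta < H^{-1}(1)$ from the outset, so no additional argument beyond the compact-interval estimates is required. Everything else is a faithful transcription of the classical argument with $H_1$ replaced by the expert-weighted version, the replacement being legitimized by Lemma~\ref{lem:crude_expert} and assumption~\eqref{eq:crude_assumption}.
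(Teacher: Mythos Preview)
Your proposal is correct and follows essentially the same route as the paper: both arguments drive the classical Shorack--Wellner proof of Theorem~7.3.1 through $\amsmathbb{H}_{\dagger,1}^{(n)}$ via Lemma~\ref{lem:crude_expert}, first establishing uniform convergence of $\mathbb{\Lambda}_{\dagger}^{(n)}$ by a two-term decomposition plus integration by parts, and then passing to $\amsmathbb{F}_\dagger^{(n)}$ through Gill's identity~\eqref{eq:F_dagger_representation}. The only cosmetic differences are that the paper uses the symmetric decomposition (holding $1/(1-H(s-))$ rather than $1/(1-\amsmathbb{H}^{(n)}(s-))$ fixed in the $\amsmathbb{H}_{\dagger,1}^{(n)}-H_1$ integral, which avoids having to control a random total variation), and that the paper makes the ``Gronwall-type'' step fully explicit by introducing the auxiliary process $\amsmathbb{K}^{(n)}(\tau)=\int_{[0,\tau]}(1-\Delta\Lambda(s))^{-1}\,(\mathbb{\Lambda}_{\dagger}^{(n)}-\Lambda)(\mathrm{d}s)$ and carefully handling the jump part of $\Lambda$---bookkeeping you correctly flag as the main obstacle.
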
 
\begin{proof}
This proof is in the spirit of the proof of Theorem~7.3.1 in~\cite{ShorackWellner2009}. Since $X$ is independent of $Y \wedge C$, the representations~\eqref{eq:lambda_entirely_random} and~\eqref{eq:Lambda_dagger_representation} yield
\begin{align}\label{eq:lambda_dagger_two_cases}
\begin{split}
\sup_{0 \leq t \leq \theta}\big| \mathbb{\Lambda}_{\dagger}^{(n)}(t) - \Lambda(t) \big|
&\leq
\sup_{0 \leq t \leq \theta}\bigg| \int_{[0,t]} \Big(\frac{1}{1-\amsmathbb{H}^{(n)}(s-)}-\frac{1}{1-H(s-)}\Big) \amsmathbb{H}_{\dagger,1}^{(n)}(\mathrm{d}s) \bigg| \\
&\quad +
\sup_{0 \leq t \leq \theta}\bigg| \int_{[0,t]} \frac{1}{1 - H(s-)} \big(\amsmathbb{H}_{\dagger,1}^{(n)} - H_1\big)(\mathrm{d}s) \bigg|.
\end{split}
\end{align}
We first take a closer look at the last term. Using integration by parts one may show that for $0 \leq t \leq \theta$,
\begin{align*}
&\bigg|\int_{[0,t]} \frac{1}{1 - H(s-)} \big(\amsmathbb{H}_{\dagger,1}^{(n)} - H_1\big)(\mathrm{d}s) \bigg| \\
&\leq
\bigg|\frac{1}{1-H(t-)} \big(\amsmathbb{H}_{\dagger,1}^{(n)}(t) - H_1(t)\big)\bigg|
+
\bigg|\int_{[0,t]} \big( \amsmathbb{H}_{\dagger,1}^{(n)}(s) - H_1(s) \big) \mathrm{d}\Big(\frac{1}{1 - H(s)}\Big) \bigg|\\
&\leq
\frac{1}{1-H(\theta)} \big|\amsmathbb{H}_{\dagger,1}^{(n)}(t) - H_1(t)\big|
+
\frac{1}{1-H(\theta)}\sup_{0 \leq s \leq t} \big| \amsmathbb{H}_{\dagger,1}^{(n)}(s) - H_1(s) \big|.
\end{align*}
Thus the last term of~\eqref{eq:lambda_dagger_two_cases} converges almost surely to zero as $n\to\infty$ according to Lemma~\ref{lem:crude_expert}. Concerning the other term, we note that
\begin{align*}
&\sup_{0 \leq t \leq \theta}\bigg| \int_{[0,t]} \Big(\frac{1}{1-\amsmathbb{H}^{(n)}(s-)}-\frac{1}{1-H(s-)}\Big) \amsmathbb{H}_{\dagger,1}^{(n)}(\mathrm{d}s) \bigg| \\
&\leq
\sup_{0 \leq t \leq \theta}\Big| \frac{1}{1-\amsmathbb{H}^{(n)}(s-)}-\frac{1}{1-H(s-)}\Big|.
\end{align*}
The ordinary Glivenko--Cantelli theorem yields
\begin{align*}
\sup_{0 \leq t < \infty}\big| \amsmathbb{H}^{(n)}(t) - H(t) \big| \overset{\text{a.s.}}{\to} 0, \quad n \to \infty,
\end{align*}
so that also
\begin{align*}
\sup_{0 \leq t < \infty}\big| \amsmathbb{H}^{(n)}(t-) - H(t-) \big| \overset{\text{a.s.}}{\to} 0, \quad n \to \infty,
\end{align*}
where we rely on the fact that $\amsmathbb{H}^{(n)}$ and $H$ are càdlàg. Since $[0,G(\theta-)] \ni x \mapsto \frac{1}{1-x}$ is uniformly continuous, this implies
\begin{align*}
\sup_{0 \leq t \leq \theta}\Big| \frac{1}{1-\amsmathbb{H}^{(n)}(s-)}-\frac{1}{1-H(s-)}\Big| \overset{\text{a.s.}}{\to} 0, \quad n \to \infty,
\end{align*}
Collecting results establishes the first assertion of the theorem. Furthermore, since $\mathbb{\Lambda}_{\dagger}^{(n)}$ and $\Lambda$ are càdlàg, it also holds that
\begin{align}\label{eq:lambda_crude_left_conv}
\sup_{0 \leq t \leq \theta}\big| \mathbb{\Lambda}_{\dagger}^{(n)}(t-) - \Lambda(t-) \big| \overset{\text{a.s.}}{\to} 0, \quad n \to \infty.
\end{align}
We now turn our attention to the second assertion of the theorem. It follows from~\eqref{eq:F_dagger_representation} and integration by parts that for $0\leq t \leq \theta$,
\begin{align*}
\big| \amsmathbb{F}_\dagger^{(n)}(t) - F(t) \big|
&=
\big(1-F(t)\big) \bigg| \int_{[0,t]} \frac{1-\amsmathbb{F}_{\dagger}^{(n)}(s-)}{1-F(s)}\big(\mathbb{\Lambda}_{\dagger}^{(n)} - \Lambda\big)(\mathrm{d}s) \bigg| \\
&\leq
\bigg| \int_{[0,t]} \frac{1-\amsmathbb{F}_{\dagger}^{(n)}(s-)}{1-F(s-)}\frac{1}{1-\Delta\Lambda(s)}\big(\mathbb{\Lambda}_{\dagger}^{(n)} - \Lambda\big)(\mathrm{d}s) \bigg| \\
&\leq 
 \frac{1-\amsmathbb{F}_{\dagger}^{(n)}(t)}{1-F(t)} \big|\amsmathbb{K}^{(n)}(t)\big|
+
\bigg| \int_{[0,t]} \amsmathbb{K}^{(n)}(s) \Big(\frac{1-\amsmathbb{F}_{\dagger}^{(n)}}{1-F}\Big)(\mathrm{d}s) \bigg| \\
&\leq
 \frac{1}{1-F(\theta)} \big|\amsmathbb{K}^{(n)}(t)\big|
+
\bigg| \int_{[0,t]} \amsmathbb{K}^{(n)}(s) \Big(\frac{1-\amsmathbb{F}_{\dagger}^{(n)}}{1-F}\Big)(\mathrm{d}s) \bigg|,
\end{align*}
where the auxiliary quantity $\amsmathbb{K}^{(n)}$ is given by
\begin{align*}
\amsmathbb{K}^{(n)}(\tau) = \int_{[0,\tau]} \frac{1}{1-\Delta\Lambda(s)} \big(\mathbb{\Lambda}_{\dagger}^{(n)} - \Lambda\big)(\mathrm{d}s), \quad 0 \leq \tau \leq \theta.
\end{align*}
Integration by parts also yields
\begin{align*}
\Big(\frac{1-\amsmathbb{F}_{\dagger}^{(n)}}{1-F}\Big)(\mathrm{d}s)
=
\frac{1}{1-F(s-)}\amsmathbb{F}_{\dagger}^{(n)}(\mathrm{d}s) - \frac{\amsmathbb{F}_{\dagger}^{(n)}(s)}{\big(1-F(s-)\big)\big(1-F(s)\big)} F(\mathrm{d}s)
\end{align*}
on $[0,\theta]$. In particular, the total variation of
\begin{align*}
[0,\theta] \ni \tau \mapsto 
\frac{1-\amsmathbb{F}_{\dagger}^{(n)}(\tau)}{1-F(\tau)}
\end{align*}
is bounded by $2{(1-F(\theta))}^{-2}$. Collecting results yields
\begin{align*}
\big| \amsmathbb{F}_\dagger^{(n)}(t) - F(t) \big|
\leq
\frac{3}{\big(1-F(\theta)\big)^2} \sup_{0 \leq s \leq t} \big|\amsmathbb{K}^{(n)}(s) \big|,
\end{align*}
so that also
\begin{align*}
\sup_{0 \leq t \leq \theta} \big| \amsmathbb{F}_\dagger^{(n)}(t) - F(t) \big|
\leq
\frac{3}{\big(1-F(\theta)\big)^2} \sup_{0 \leq t \leq \theta} \big|\amsmathbb{K}^{(n)}(t) \big|.
\end{align*}
It remains to be shown that $\sup_{0 \leq t \leq \theta} |\amsmathbb{K}^{(n)}(t)|$ converges almost surely to zero as $n\to \infty$. Note that
\begin{align*}
\sup_{0 \leq t \leq \theta} \big|\amsmathbb{K}^{(n)}(t) \big|
&=
\sup_{0 \leq t \leq \theta} \bigg| \int_{[0,t]} \frac{\Delta F(s)}{1-F(s)} \big(\mathbb{\Lambda}_{\dagger}^{(n)} - \Lambda\big)(\mathrm{d}s) + \int_{[0,t]} \big(\mathbb{\Lambda}_{\dagger}^{(n)} - \Lambda\big)(\mathrm{d}s) \bigg| \\
&\leq
\sum_{0 \leq t \leq \theta \atop \Delta F(t) > 0} \frac{\Delta F(t)}{1-F(t)} \big|\Delta\mathbb{\Lambda}_{\dagger}^{(n)}(t) - \Delta\Lambda(t)\big|
+
\sup_{0 \leq t \leq \theta}\big| \mathbb{\Lambda}_{\dagger}^{(n)}(t) - \Lambda(t) \big|.
\end{align*}
According to the first assertion of the theorem, the last term converges almost surely to zero as $n \to \infty$. Concerning the other term, we find that
\begin{align*}
&\sum_{0 \leq t \leq \theta \atop \Delta F(t) > 0} \frac{\Delta F(t)}{1-F(t)} \big|\Delta\mathbb{\Lambda}_{\dagger}^{(n)}(t) - \Delta\Lambda(t)\big| \\
&\leq
\Big(\sup_{0 \leq t \leq \theta}\big| \mathbb{\Lambda}_{\dagger}^{(n)}(t) - \Lambda(t) \big| + \sup_{0 \leq t \leq \theta}\big| \Lambda(t-) - \mathbb{\Lambda}_{\dagger}^{(n)}(t-) \big| \Big)
\sum_{0 \leq t \leq \theta \atop \Delta F(t) > 0} \frac{\Delta F(t)}{1-F(t)} \\
&\leq
\frac{F(\theta)}{1-F(\theta)}
\Big(\sup_{0 \leq t \leq \theta}\big| \mathbb{\Lambda}_{\dagger}^{(n)}(t) - \Lambda(t) \big| + \sup_{0 \leq t \leq \theta}\big| \mathbb{\Lambda}_{\dagger}^{(n)}(t-) - \Lambda(t-) \big| \Big),
\end{align*}
which also converges almost surely to zero as $n \to \infty$, confer with the first assertion of the theorem as well as~\eqref{eq:lambda_crude_left_conv}. Collecting results completes the proof.
\end{proof}

\subsection{Sophisticated expert} \label{sec:sophisticated}

The sophisticated expert provides beliefs in the form of random probability measures $K_i^{(n)}$, $i=1,\dots,n$, on $[0,\infty)$ equipped with the Borel $\sigma$-algebra, such that $K_i^{(n)}\big([0,W_i)\big) =0$. The associated random distribution functions, also denoted $K_i^{(n)}$, $i=1,\ldots,n$, are thus given by
\begin{align*}
K_i^{(n)}(t) = K_i^{(n)}\big([W_i,t)\big),\quad t\geq0.
\end{align*}
Consequently, it holds that $\mathds{1}_{\{W_i > t\}} K_i^{(n)}(t) = 0$ for any $t\geq0$. If the above random distribution functions (almost surely) admit derivatives, then we may express these as
\begin{align*}
k_i^{(n)}(t)=\frac{{\rm d}}{{\rm d}t}K_i^{(n)}(t),\quad \mbox{with} \quad {\int_{W_i}^\infty} k_i^{(n)}(t) \, \mathrm{d}t=1,
\end{align*}
and $k_i^{(n)}$ being a kernel {density} estimate of the whereabouts of $X_i$. To elaborate: Ideally $t \mapsto K_i^{(n)}(t)$ is as close as possible to $t \mapsto \mathds{1}_{\{X_i \leq t\}}$, and so $k_i^{(n)}$ is as close as possible to the Dirac point mass at $X_i$. The nature of such closeness is clarified below. We require that $K_1^{(n)}, \ldots, K_n^{(n)}$ are independent for fixed $n$, but not necessarily identically distributed.

{In the numerical studies, we essentially use (truncated) Gaussian kernels. To be precise, we set
\begin{align*}
k_i(t)
=
\frac{\phi_i(t) \mathds{1}_{\{W_i \leq t\}}}{1-\Phi(W_i)},
\end{align*}
where $\phi_i$ is a Gaussian density and $\Phi_i$ is the corresponding distribution function, both with random parameters depending on, for instance, $W_i$. Popular alternatives to the Gaussian kernel function include the uniform, triangular, Epanechnikov, and sigmoid kernel functions.}

Define the sophisticated expert Kaplan--Meier estimator $\amsmathbb{F}_{\star}^{(n)}$ according to
\begin{align*}
\amsmathbb{F}_{\star}^{(n)}(t) = \frac{1}{n} \sum_{i=1}^n \frac{K_i^{(n)}(t)\delta_i}{1-\amsmathbb{G}^{(n)}(W_i-)}, \quad t\geq0.
\end{align*}
{Due to the condition $K_i^{(n)}\big([0,W_i)\big) =0$, the usual Kaplan--Meier estimator, which corresponds to setting $K_i^{(n)}(t) = \mathds{1}_{\{W_i \leq t\}}$, constitutes an upper bound for $\amsmathbb{F}_{\star}^{(n)}$.}

We also define the oracle Kaplan--Meier estimator $\amsmathbb{F}_{\circ}^{(n)}$ according to
\begin{align*}
\amsmathbb{F}_{\circ}^{(n)}(t) = \frac{1}{n} \sum_{i=1}^n \frac{\mathds{1}_{\{X_i \leq t\}}\delta_i}{1-\amsmathbb{G}^{(n)}(W_i-)}, \quad t\geq0,
\end{align*}
which corresponds to the sophisticated expert Kaplan--Meier estimator with prescience, that is $K_i^{(n)}(t) = \mathds{1}_{\{X_i \leq t\}}$. The oracle Kaplan--Meier estimator is a purely theoretical concept and of course not really relevant in practice.
\begin{lemma}\label{lem:delta_G_link}
Suppose that both contamination and right-censoring are entirely random. Let $0 \leq \theta < H^{-1}(1)$. Then
\begin{align*}
\sup_{0\leq t \leq \theta} \bigg| \frac{1}{n} \sum_{i=1}^n \mathds{1}_{\{X_i \leq t\}} \frac{\delta_i}{1-G(W_i-)} - F(t) \bigg| \overset{\text{a.s.}}{\to} 0, \quad n \to \infty.
\end{align*}
\end{lemma}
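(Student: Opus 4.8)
The plan is to exploit that, under mutual independence of $X$, $Y$ and $C$, the factor $1/(1-G(W_i-))$ is in conditional expectation exactly the correction undoing the thinning caused by right-censoring, so that each summand $\mathds{1}_{\{X_i\le t\}}\delta_i/(1-G(W_i-))$ is an unbiased estimator of $F(t)$; the assertion is then a Glivenko--Cantelli statement for these iid summands, once they have been rendered bounded. For the reduction to bounded summands, fix $\theta<H^{-1}(1)$ and note that, by mutual independence, $1-H=(1-F)(1-G)(1-F_Y)$ with $F_Y$ the distribution function of $Y$, so $\theta<G^{-1}(1)$, i.e.\ $G(\theta)<1$. For $0\le t\le\theta$, on the event $\{X_i\le t,\ \delta_i=1\}$ one has $W_i=X_i\wedge Y_i\le X_i\le\theta$, hence $\mathds{1}_{\{X_i\le t\}}\,\delta_i/(1-G(W_i-))=\mathds{1}_{\{X_i\le t\}}A_i$ almost surely, where $A_i:=\delta_i\,\mathds{1}_{\{W_i\le\theta\}}/(1-G(W_i-))$; since $G(W_i-)\le G(\theta)$ on $\{W_i\le\theta\}$, the pairs $(X_i,A_i)$ are iid with $0\le A_i\le(1-G(\theta))^{-1}=:B<\infty$.

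Next I would compute the mean. Using $\delta_i=\mathds{1}_{\{X_i\wedge Y_i\le C_i\}}$ (so that $W_i=X_i\wedge Y_i$ on $\{\delta_i=1\}$), the independence of $C_i$ from $(X_i,Y_i)$, and $\amsmathbb{P}(C_i\ge v)=1-G(v-)$, conditioning on $(X_i,Y_i)$ gives $\amsmathbb{E}[A_i\mid X_i,Y_i]=\mathds{1}_{\{X_i\wedge Y_i\le\theta\}}$, the cancellation of $1-G((X_i\wedge Y_i)-)$ being legitimate precisely because this quantity is $\ge1-G(\theta)>0$ on $\{X_i\wedge Y_i\le\theta\}$. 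Since $\{X_i\le t\}\subseteq\{X_i\wedge Y_i\le\theta\}$ for $t\le\theta$, it follows that $\amsmathbb{E}[\mathds{1}_{\{X_i\le t\}}A_i]=F(t)$ for all $0\le t\le\theta$.

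Finally, for the uniformity I would use a layer-cake representation: from $A_i=\int_0^B\mathds{1}_{\{A_i\ge a\}}\,\mathrm{d}a$ and Fubini, $\frac1n\sum_{i=1}^n\mathds{1}_{\{X_i\le t\}}A_i=\int_0^B\big(\frac1n\sum_{i=1}^n\mathds{1}_{\{X_i\le t,\,A_i\ge a\}}\big)\,\mathrm{d}a$, where the inner average is the bivariate empirical distribution function of the iid pairs $(X_i,-A_i)$ evaluated at $(t,-a)$. By the multivariate Glivenko--Cantelli theorem it converges, almost surely and uniformly in $(t,a)$, to $\amsmathbb{P}(X_1\le t,\,A_1\ge a)$, while $\int_0^B\amsmathbb{P}(X_1\le t,A_1\ge a)\,\mathrm{d}a=F(t)$ for $t\le\theta$ by the previous step; hence $\sup_{0\le t\le\theta}\big|\frac1n\sum_{i=1}^n\mathds{1}_{\{X_i\le t\}}A_i-F(t)\big|\le B\sup_{(t,a)}\big|\frac1n\sum_{i=1}^n\mathds{1}_{\{X_i\le t,A_i\ge a\}}-\amsmathbb{P}(X_1\le t,A_1\ge a)\big|\overset{\text{a.s.}}{\to}0$, which is the claim. (Alternatively one may sandwich the monotone random function $t\mapsto\frac1n\sum_i\mathds{1}_{\{X_i\le t\}}A_i$ between its values on a finite grid adapted to the atoms of $F$, in P\'olya's style.) I expect the only genuinely delicate point to be the mean computation: one must be careful with left limits ($G(v-)$ versus $\amsmathbb{P}(C<v)$), with the tie $X_i\wedge Y_i=C_i$ in the definition of $\delta_i$, and with dividing by $1-G((X_i\wedge Y_i)-)$ only where it is strictly positive --- which is exactly what the truncation $\mathds{1}_{\{W_i\le\theta\}}$, together with $\theta<H^{-1}(1)$, secures.
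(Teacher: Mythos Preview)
Your proof is correct and follows a genuinely different route from the paper's. Two points of comparison are worth noting.

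For the mean computation, the paper establishes the stronger identity $\amsmathbb{E}\big[\delta/(1-G(W-))\,\big|\,X\big]\overset{\text{a.s.}}{=}1$ by splitting $\delta=\mathds{1}_{\{Y\le C,\,Y<X\}}+\mathds{1}_{\{X\le C,\,X\le Y\}}$ and conditioning on $X$ alone; this requires separate handling of the two events and repeated use of mutual independence. Your approach---conditioning on the pair $(X_i,Y_i)$ after the truncation $\mathds{1}_{\{W_i\le\theta\}}$---collapses the computation to a single line, since only the integration over $C_i$ remains. The price is that you obtain the mean identity only on $\{X_i\wedge Y_i\le\theta\}$, but as you observe, this set contains $\{X_i\le t\}$ for $t\le\theta$, which is all that is needed. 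Your truncation also sidesteps any concern about dividing by zero, whereas the paper's argument implicitly relies on $1-G((X\wedge Y)-)>0$ almost surely without isolating that point.

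For the passage from pointwise to uniform convergence, the paper uses exactly the P\'olya-style grid argument that you mention parenthetically: one partitions according to quantiles of $F$ and exploits monotonicity of $t\mapsto\widetilde{\amsmathbb{F}}^{(n)}(t)$. Your layer-cake plus bivariate Glivenko--Cantelli argument is an equally valid alternative; it trades the explicit grid construction for an appeal to a slightly heavier off-the-shelf result, but has the advantage that it does not require monotonicity of the summands in $t$ (which here is automatic anyway).
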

\begin{proof}
We first show that
\begin{align}\label{eq:lem_aux}
\amsmathbb{E}\Big[\frac{\delta}{1 - G(W-)} \, \Big| \, X\Big] \overset{\text{a.s.}}{=} 1.
\end{align}
From the local property of the conditional expectation, confer with~Lemma 8.3 in~\cite{Kallenberg2021}, and the fact that
\begin{align*}
\delta = \mathds{1}_{\{Y \leq C, Y < X\}} + \mathds{1}_{\{X \leq C, X \leq Y\}},
\end{align*}
it holds almost surely that
\begin{align*}
\frac{\delta}{1-G(W-)}
&=
\frac{\mathds{1}_{\{Y \leq C, Y < X\}}}{1-G(Y-)}
+
\frac{\mathds{1}_{\{X \leq C, X \leq Y\}}}{1-G(X-)} \\
&=
\frac{\mathds{1}_{\{Y \leq C, Y < X\}}}{\amsmathbb{E}[\mathds{1}_{\{C \geq Y\}} \, | \, Y]}
+
\frac{\mathds{1}_{\{X \leq C, X \leq Y\}}}{\amsmathbb{E}[\mathds{1}_{\{C \geq X\}} \, | \, X]},
\end{align*}
where the last equality uses that $C$ is independent of $(X,Y)$ as per assumption. Employing this representation as well as the fact that $X$ is independent of $(Y,C)$ and that $Y$ and $C$ are conditionally independent given $X$ as per assumption, we find that almost surely
\begin{align*}
\amsmathbb{E}\Big[\frac{\delta}{1 - G(W-)} \, \Big| \, X\Big]
&=
\amsmathbb{E}\Big[\frac{\mathds{1}_{\{Y \leq C, Y < X\}}}{\amsmathbb{E}[\mathds{1}_{\{C \geq Y\}} \, | \, Y]} + \frac{\mathds{1}_{\{X \leq C, X \leq Y\}}}{\amsmathbb{E}[\mathds{1}_{\{C \geq X\}} \, | \, X]} \, \Big| \, X\Big] \\
&=
\amsmathbb{E}[\mathds{1}_{\{Y < X\}}] \frac{\amsmathbb{E}[\mathds{1}_{\{Y \leq C\}}]}{\amsmathbb{E}[\mathds{1}_{\{C \geq Y\}}]}
+
\frac{\amsmathbb{E}[\mathds{1}_{\{X \leq C, X \leq Y\}} \, | \, X]}{\amsmathbb{E}[\mathds{1}_{\{C \geq X\}} \, | \, X]} \\
&=
\amsmathbb{E}[\mathds{1}_{\{Y < X\}}] + \amsmathbb{E}[\mathds{1}_{\{X \leq Y\}}] \frac{\amsmathbb{E}[\mathds{1}_{\{X \leq C\}}]}{\amsmathbb{E}[\mathds{1}_{\{C \geq X\}}]}\\
&= 1,
\end{align*}
which establishes~\eqref{eq:lem_aux}. We now turn our attention to the actual assertion of the lemma. Note that~\eqref{eq:lem_aux} and the law of iterated expectations yield
\begin{align*}
\amsmathbb{E}\Big[\mathds{1}_{\{X \leq t\}}\frac{\delta}{1 - G(W-)} \Big] = F(t), \quad t\geq0,
\end{align*}
so that the strong law of large numbers implies
\begin{align}\label{eq:Ftilde_conv}
\widetilde{\amsmathbb{F}}^{(n)}(t) = \frac{1}{n} \sum_{i=1}^n \mathds{1}_{\{X_i \leq t\}} \frac{\delta_i}{1-G(W_i-)} \overset{\text{a.s.}} \to F(t), \quad n \to \infty,
\end{align}
for any $t\geq0$. {We now extend this pointwise convergence to uniform convergence by a standard argument lifted from the proof of the ordinary Glivenko--Cantelli theorem.} Let $\varepsilon>0$. Fix $M\in\amsmathbb{N}$ such that $\varepsilon^{-1} < M$. Take
\begin{align*}
-\infty = t_0 < t_1 \leq \cdots \leq t_{M-1} < t_M = \infty
\end{align*}
such that
\begin{align*}
F(t_m-) \leq \frac{m}{M} \leq F(t_m), \quad m=1,\ldots,M-1.
\end{align*}
Note that
\begin{align}\label{eq:tilde_help}
F(t_m-) - F(t_{m-1}) < \epsilon \quad \text{for} \quad t_{m-1} < t_m.
\end{align}
Similar to~\eqref{eq:Ftilde_conv}, also
\begin{align*}
\widetilde{\amsmathbb{F}}^{(n)}(t-) \overset{\text{a.s.}} \to F(t-), \quad n \to \infty,
\end{align*}
for any $t\geq0$. In combination, we find that
\begin{align*}
\max_{m=1,\ldots,M-1}\Big\{\big| \widetilde{\amsmathbb{F}}^{(n)}(t_m) - F(t_m)\big|, \big| \widetilde{\amsmathbb{F}}^{(n)}(t_m-) - F(t_m-)\big|\Big\} \overset{\text{a.s.}}{\to} 0, \quad n \to \infty.
\end{align*}
For any $t\geq0$ identify an $m$ such that $t_{m-1} \leq t < t_m$. Then
\begin{align*}
\widetilde{\amsmathbb{F}}^{(n)}(t_{m-1}) - F(t_{m-1}) - \varepsilon \leq \widetilde{\amsmathbb{F}}^{(n)}(t) - F(t) \leq \widetilde{\amsmathbb{F}}^{(n)}(t_m-) - F(t_m-) + \varepsilon
\end{align*}
using~\eqref{eq:tilde_help}. Thus for any $t\geq0$ we find that
\begin{align*}
&\big| \widetilde{\amsmathbb{F}}^{(n)}(t) - F(t) \big| \\
&\leq 
\max_{m=1,\ldots,M-1}\Big\{\big| \widetilde{\amsmathbb{F}}^{(n)}(t_m) - F(t_m)\big|, \big| \widetilde{\amsmathbb{F}}^{(n)}(t_m-) - F(t_m-)\big|\Big\} + \varepsilon
\end{align*}
so that
\begin{align}\label{eq:Ftilde_epsilon}
\begin{split}
&\sup_{0 \leq t < \infty} \big| \widetilde{\amsmathbb{F}}^{(n)}(t) - F(t) \big| \\
&\leq 
\max_{m=1,\ldots,M-1}\Big\{\big| \widetilde{\amsmathbb{F}}^{(n)}(t_m) - F(t_m)\big|, \big| \widetilde{\amsmathbb{F}}^{(n)}(t_m-) - F(t_m-)\big|\Big\} + \varepsilon \\
&\overset{\text{a.s.}}{\to} \varepsilon, \quad n \to \infty.
\end{split}
\end{align}
Let now
\begin{align*}
B_\epsilon = \Big\{ \lim_{n \to \infty} \sup_{0 \leq t < \infty} \big| \widetilde{\amsmathbb{F}}^{(n)}(t) - F(t) \big| \leq \varepsilon\Big\},
\end{align*}
so that~\eqref{eq:Ftilde_epsilon} reads $\amsmathbb{P}(B_\epsilon) = 1$. Downward continuity of measures then yields
\begin{align*}
\amsmathbb{P}\Big(\lim_{n \to \infty} \sup_{0 \leq t < \infty} \big| \widetilde{\amsmathbb{F}}^{(n)}(t) - F(t) \big| = 0\Big)
=
\amsmathbb{P}\Big(\bigcap_{\varepsilon>0} B_\varepsilon\Big) = \lim_{\varepsilon \downarrow 0} \amsmathbb{P}(B_\varepsilon) = 1
\end{align*}
as desired.
\end{proof}
The next result establishes strong consistency on compacts for the sophisticated expert Kaplan--Meier estimator under suitable conditions. We discuss the conditions in more detail in the next subsection.
\begin{theorem}[Strong uniform consistency of the sophisticated expert]\label{thm:cons_soph}
Suppose that both contamination and right-censoring are entirely random. Let $0\leq\theta<H^{-1}(1)$. Suppose that
\begin{align}\label{eq:sophisticated_assumption}
\amsmathbb{E}\Big[ \frac{1}{n}\sum_{i=1}^n  \sup_{0 \leq t \leq \theta} \big| K_i^{(n)}(t) - \mathds{1}_{\{X_i \leq t\}} \big|\Big] \to 0, \quad n \to \infty.
\end{align}
Then
\begin{align*}
\sup_{0 \leq t \leq \theta}\big| \amsmathbb{F}_{\star}^{(n)}(t) - F(t) \big| \overset{\text{a.s.}}{\to} 0, \quad n \to \infty.
\end{align*}
\end{theorem}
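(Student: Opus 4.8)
The plan is to bound the distance between $\amsmathbb{F}_{\star}^{(n)}$ and $F$ by three pieces: first replacing the kernels $K_i^{(n)}$ by the oracle indicators $\mathds{1}_{\{X_i \leq t\}}$, then replacing the empirical censoring Kaplan--Meier weight $1-\amsmathbb{G}^{(n)}(W_i-)$ by the true weight $1-G(W_i-)$, and finally invoking Lemma~\ref{lem:delta_G_link}. Concretely, write
\begin{align*}
\amsmathbb{F}_{\star}^{(n)}(t) - F(t)
&=
\frac{1}{n}\sum_{i=1}^n \frac{\big(K_i^{(n)}(t) - \mathds{1}_{\{X_i \leq t\}}\big)\delta_i}{1-\amsmathbb{G}^{(n)}(W_i-)} \\
&\quad +
\frac{1}{n}\sum_{i=1}^n \mathds{1}_{\{X_i \leq t\}}\delta_i\Big(\frac{1}{1-\amsmathbb{G}^{(n)}(W_i-)} - \frac{1}{1-G(W_i-)}\Big) \\
&\quad +
\Big(\frac{1}{n}\sum_{i=1}^n \frac{\mathds{1}_{\{X_i \leq t\}}\delta_i}{1-G(W_i-)} - F(t)\Big),
\end{align*}
and estimate the supremum over $0 \leq t \leq \theta$ of each line separately.

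First I would handle the second and third lines, as these are the easier ones. The third line tends to zero almost surely, uniformly on $[0,\theta]$, directly by Lemma~\ref{lem:delta_G_link}. For the second line, note that on $[0,\theta]$ the relevant $W_i$ contributing a nonzero term satisfy $\mathds{1}_{\{X_i\le t\}}\delta_i\ne0$, hence $W_i = X_i \le t \le \theta$, so the weights involve $\amsmathbb{G}^{(n)}$ and $G$ evaluated on $[0,\theta]$ only; since $\amsmathbb{G}^{(n)}$ is strongly uniformly consistent for $G$ on $[0,\theta]$ by~\eqref{eq:km_censored_cons}, and $x\mapsto (1-x)^{-1}$ is uniformly continuous and bounded on $[0,G(\theta-)]$ (with $G(\theta-)<1$ since $\theta<H^{-1}(1)\le G^{-1}(1)$), this line is bounded by $o(1)$ almost surely times $\frac1n\sum_i \mathds{1}_{\{X_i\le\theta\}}\delta_i(1-G(W_i-))^{-1}$, which is $O(1)$ almost surely again by Lemma~\ref{lem:delta_G_link} (or~\eqref{eq:Ftilde_conv}). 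Hence the second line vanishes almost surely, uniformly on $[0,\theta]$.

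The main obstacle is the first line. The natural bound is
\begin{align*}
\sup_{0\le t\le\theta}\bigg|\frac{1}{n}\sum_{i=1}^n \frac{\big(K_i^{(n)}(t)-\mathds{1}_{\{X_i\le t\}}\big)\delta_i}{1-\amsmathbb{G}^{(n)}(W_i-)}\bigg|
\le
\frac{1}{1-\amsmathbb{G}^{(n)}(\theta)}\cdot\frac{1}{n}\sum_{i=1}^n \sup_{0\le t\le\theta}\big|K_i^{(n)}(t)-\mathds{1}_{\{X_i\le t\}}\big|,
\end{align*}
using $\delta_i\mathds{1}_{\{W_i\le\theta\}}$-type reasoning to control the denominator by $1-\amsmathbb{G}^{(n)}(\theta)$, which converges almost surely to $1-G(\theta)>0$. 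It then remains to show
\begin{align*}
\frac{1}{n}\sum_{i=1}^n \sup_{0\le t\le\theta}\big|K_i^{(n)}(t)-\mathds{1}_{\{X_i\le t\}}\big| \overset{\text{a.s.}}{\to} 0.
\end{align*}
This is a strong law of large numbers for a non-identically-distributed triangular array of $[0,1]$-valued variables $Z_i^{(n)} := \sup_{0\le t\le\theta}|K_i^{(n)}(t)-\mathds{1}_{\{X_i\le t\}}|$; assumption~\eqref{eq:sophisticated_assumption} supplies exactly that $\frac1n\sum_i\amsmathbb{E}[Z_i^{(n)}]\to0$. Since the $Z_i^{(n)}$ are independent for fixed $n$ and uniformly bounded by $1$, one invokes the strong law of large numbers for triangular arrays (as used for $\eta_i^{(n)}$ in the proof of Lemma~\ref{lem:crude_expert}): the centered average $\frac1n\sum_i(Z_i^{(n)}-\amsmathbb{E}[Z_i^{(n)}])\to0$ almost surely, and the mean term $\frac1n\sum_i\amsmathbb{E}[Z_i^{(n)}]\to0$ by~\eqref{eq:sophisticated_assumption}, so the sum converges almost surely to zero. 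Collecting the three lines completes the proof. The delicate points to get right are that the oracle-to-truth weight replacement only ever touches $\amsmathbb{G}^{(n)}$ and $G$ on the compact $[0,\theta]$ (so uniform consistency of $\amsmathbb{G}^{(n)}$ there suffices), and that the supremum over $t$ passes inside the sum in the first line before taking expectations — which is why~\eqref{eq:sophisticated_assumption} is stated with the supremum inside.
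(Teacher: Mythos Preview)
Your proof is correct and follows essentially the same approach as the paper: a three-term decomposition using Lemma~\ref{lem:delta_G_link}, the uniform consistency~\eqref{eq:km_censored_cons} of $\amsmathbb{G}^{(n)}$, and the strong law of large numbers for bounded triangular arrays applied to $Z_i^{(n)}=\sup_{0\le t\le\theta}|K_i^{(n)}(t)-\mathds{1}_{\{X_i\le t\}}|$. The only difference is the telescoping order---the paper replaces $\amsmathbb{G}^{(n)}$ by $G$ first (keeping $K_i^{(n)}$ in the numerator of the weight-replacement term) and then $K_i^{(n)}$ by $\mathds{1}_{\{X_i\le t\}}$, whereas you do the opposite---which is immaterial. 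One small slip: in your second line you write ``hence $W_i=X_i\le t$'', but $\delta_i=1$ only gives $W_i=X_i\wedge Y_i$, not $W_i=X_i$; the needed conclusion $W_i\le X_i\le t\le\theta$ still holds since $W_i\le X_i$ always, so the argument is unaffected.
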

\begin{proof}
Note that
\begin{align}\label{eq:sophisticated_split}
\begin{split}
&\sup_{0 \leq t \leq \theta}\big| \amsmathbb{F}_{\star}^{(n)}(t) - F(t) \big| \\
&\leq
\sup_{0 \leq t \leq \theta} \bigg| \frac{1}{n} \sum_{i=1}^n \mathds{1}_{\{X_i \leq t\}} \frac{\delta_i}{1-G(W_i-)} - F(t) \bigg| \\
&\quad+
\sup_{0 \leq t \leq \theta} \bigg| \frac{1}{n} \sum_{i=1}^n \big(K_i^{(n)}(t)-\mathds{1}_{\{X_i \leq t\}}\big) \frac{\delta_i}{1-G(W_i-)} \bigg| \\
&\quad+
\sup_{0 \leq t \leq \theta} \bigg| \frac{1}{n} \sum_{i=1}^n K_i^{(n)}(t) \Big(\frac{\delta_i}{1-\amsmathbb{G}^{(n)}(W_i-)} - \frac{\delta_i}{1-G(W_i-)} \Big) \bigg|.
\end{split}
\end{align}
The right-hand side of the first line converges almost surely to zero as $n\to\infty$ according to Lemma~\ref{lem:delta_G_link}. Turning our attention to the second line and using that both $\mathds{1}_{\{W_i>t\}}K_i^{(n)}(t)=0$ and $\mathds{1}_{\{W_i > t\}}\mathds{1}_{\{X_i \leq t\}} = 0$ for any $t\geq0$, we find that
\begin{align*}
&\sup_{0 \leq t \leq \theta} \bigg| \frac{1}{n} \sum_{i=1}^n \big(K_i^{(n)}(t)-\mathds{1}_{\{X_i \leq t\}}\big) \frac{\delta_i}{1-G(W_i-)} \bigg| \\
&\leq
\frac{1}{1-G(\theta-)}\frac{1}{n} \sum_{i=1}^n \sup_{0 \leq t \leq \theta} \big| K_i^{(n)}(t) - \mathds{1}_{\{X_i \leq t\}} \big| \\
&\leq \frac{1}{1-G(\theta-)}\bigg|\frac{1}{n} \sum_{i=1}^n  \sup_{0 \leq t \leq \theta} \big| K_i^{(n)}(t) - \mathds{1}_{\{X_i \leq t\}} \big| - \amsmathbb{E}\big[\sup_{0 \leq t \leq \theta} \big| K_i^{(n)}(t) - \mathds{1}_{\{X_i \leq t\}} \big| \big] \bigg| \\
&\quad+ \frac{1}{1-G(\theta-)} \amsmathbb{E}\Big[\frac{1}{n}\sum_{i=1}^n\sup_{0 \leq t \leq \theta} \big| K_i^{(n)}(t) - \mathds{1}_{\{X_i \leq t\}} \big| \Big],
\end{align*}
The last line converges almost surely to zero as $n\to\infty$; this is by assumption. Furthermore, since 
\begin{align*}
\sup_{0 \leq t \leq \theta} \big| K_i^{(n)}(t) - \mathds{1}_{\{X_i \leq t\}} \big| \leq 1,
\end{align*}
the second to last line likewise converges almost surely to zero according to the strong law of large numbers for triangular arrays. We conclude that the second line of~\eqref{eq:sophisticated_split} converges almost surely to zero. Furthermore,
\begin{align*}
&\sup_{0 \leq t \leq \theta} \bigg| \frac{1}{n} \sum_{i=1}^n K_i^{(n)}(t) \Big(\frac{\delta_i}{1-\amsmathbb{G}^{(n)}(W_i-)} - \frac{\delta_i}{1-G(W_i-)} \Big) \bigg| \\
&\leq
\frac{1}{n} \sum_{i=1}^n \sup_{0 \leq t \leq \theta} K_i^{(n)}(t) \delta_i \Big| \frac{1}{1-\amsmathbb{G}^{(n)}(W_i-)} - \frac{1}{1-G(W_i-)} \Big| \\
&\leq
\sup_{0 \leq t \leq \theta} \Big| \frac{1}{1-\amsmathbb{G}^{(n)}(t-)} - \frac{1}{1-G(t-)} \Big|
\end{align*}
where we have utilized that $0\leq K_i^{(n)}\leq1$ and, yet again, that $\mathds{1}_{\{W_i>t\}}K_i^{(n)}(t)=0$ for any $t\geq0$. Since $\amsmathbb{G}^{(n)}$ and $G$ are càdlàg, we find from~\eqref{eq:km_censored_cons} that
\begin{align*}
\sup_{0\leq t \leq \theta} \big| \amsmathbb{G}^{(n)}(t-) - G(t-) \big| \overset{\text{a.s.}}{\to} 0, \quad n \to \infty.
\end{align*}
Note that $[0,G(\theta-)] \ni x \mapsto \frac{1}{1-x}$ is uniformly continuous. Thus
\begin{align*}
\sup_{0 \leq t \leq \theta} \Big| \frac{1}{1-\amsmathbb{G}^{(n)}(t-)} - \frac{1}{1-G(t-)} \Big| \overset{\text{a.s.}}{\to} 0, \quad n \to \infty.
\end{align*}
All in all, the third and final line of~\eqref{eq:sophisticated_split} converges almost surely to zero. Collecting results then completes the proof.
\end{proof}

\begin{remark}
If the expert beliefs are identically distributed, then~\eqref{eq:sophisticated_assumption} reads
\begin{align*}
\amsmathbb{E}\Big[ \sup_{0 \leq t \leq \theta} \big| K_1^{(n)}(t) - \mathds{1}_{\{X_1 \leq t\}} \big|\Big] \to 0, \quad n \to \infty.
\end{align*}
\end{remark}

\begin{remark}
A sufficient but not necessary condition for~\eqref{eq:sophisticated_assumption} to hold is obviously
\begin{align*}
\sup_{1\leq i\leq n}\amsmathbb{E}\Big[ \sup_{0 \leq t \leq \theta} \big| K_i^{(n)}(t) - \mathds{1}_{\{X_i \leq t\}} \big|\Big] \to 0, \quad n \to \infty.
\end{align*}
\end{remark}

\begin{corollary}
Suppose that both contamination and right-censoring are entirely random. Let $0\leq\theta<H^{-1}(1)$. Then
\begin{align*}
\sup_{0 \leq t \leq \theta}\big| \amsmathbb{F}_{\circ}^{(n)}(t) - F(t) \big| \overset{\text{a.s.}}{\to} 0, \quad n \to \infty.
\end{align*}
\end{corollary}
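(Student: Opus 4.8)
The plan is to recognize the oracle Kaplan--Meier estimator as a degenerate instance of the sophisticated expert Kaplan--Meier estimator and then appeal directly to Theorem~\ref{thm:cons_soph}. Taking $K_i^{(n)}$ to be the ($X_i$-measurable) Dirac point mass at $X_i$ yields the distribution function $K_i^{(n)}(t) = \mathds{1}_{\{X_i \leq t\}}$, and substituting this into the definition of $\amsmathbb{F}_{\star}^{(n)}$ reproduces $\amsmathbb{F}_{\circ}^{(n)}$ verbatim.

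Before invoking the theorem, I would check that this choice constitutes an admissible family of expert beliefs. Since $W_i = X_i \wedge Y_i \wedge C_i \leq X_i$, the point mass at $X_i$ assigns no mass to $[0,W_i)$, so the requirement $K_i^{(n)}\big([0,W_i)\big) = 0$ is met; and because $(X_i,Y_i,C_i)_{i=1}^n$ are iid, the measures $K_1^{(n)},\dots,K_n^{(n)}$ are independent for each fixed $n$, as required. It then remains to verify condition~\eqref{eq:sophisticated_assumption}; but with $K_i^{(n)}(t) = \mathds{1}_{\{X_i \leq t\}}$ one has $\sup_{0 \leq t \leq \theta} \big| K_i^{(n)}(t) - \mathds{1}_{\{X_i \leq t\}} \big| = 0$ for every $i$ and $n$, so the left-hand side of~\eqref{eq:sophisticated_assumption} is identically zero and the hypothesis holds trivially. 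Theorem~\ref{thm:cons_soph} then delivers the claimed strong uniform consistency on $[0,\theta]$.

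There is essentially no obstacle here; the only point requiring a moment's attention is the admissibility check above. Alternatively, one could give a self-contained argument mirroring the proof of Theorem~\ref{thm:cons_soph}: Lemma~\ref{lem:delta_G_link} handles the estimator built with the true censoring distribution $G$, and the almost sure uniform convergence $\sup_{0 \leq t \leq \theta} \big| (1-\amsmathbb{G}^{(n)}(t-))^{-1} - (1-G(t-))^{-1} \big| \to 0$, which follows from~\eqref{eq:km_censored_cons} together with uniform continuity of $x \mapsto (1-x)^{-1}$ on $[0,G(\theta-)]$, lets one pass from $G$ to $\amsmathbb{G}^{(n)}$; in this reduction the middle ``expert error'' term of the decomposition~\eqref{eq:sophisticated_split} is simply absent.
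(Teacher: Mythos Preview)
Your proposal is correct and matches the paper's intended argument: the corollary is stated without proof precisely because the oracle estimator is the sophisticated expert estimator with $K_i^{(n)}(t) = \mathds{1}_{\{X_i \leq t\}}$, for which condition~\eqref{eq:sophisticated_assumption} is trivially satisfied. Your additional admissibility checks (mass to the right of $W_i$, independence across $i$) are careful and appropriate, though the paper leaves them implicit.
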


\subsection{Comparison} \label{sec:compare}
{
There are some fundamental differences between the crude and the sophisticated expert Kaplan--Meier estimators. The crude expert Kaplan--Meier estimator is essentially the usual Kaplan--Meier estimator, but using a different set of observations.} Consequently, the representation of the Kaplan--Meier estimator as an inverse probability of censoring weighted average is not important in the formulation of the crude expert Kaplan--Meier estimator. This is different from the sophisticated expert Kaplan--Meier estimator, for which this representation is absolutely fundamental. Theorem~\ref{thm:cons_crude} and Theorem~\ref{thm:cons_soph} establish strong consistency on compacts of the crude and sophisticated expert Kaplan--Meier estimators, respectively, but subject to quite different conditions. First, for the crude expert Kaplan--Meier estimator we only need to assume that contaminated right-censoring is entirely random, while in case of the sophisticated counterpart, we need the stronger condition that both contamination and right-censoring are entirely random. Secondly, consistency of the sophisticated estimator requires a rather strong condition on the quality of expert information compared to the crude estimator. For the crude expert Kaplan--Meier estimator, the convergence in mean given by~\eqref{eq:crude_assumption} is sufficient. For the sophisticated expert Kaplan--Meier estimator, we instead rely on the $L^{1}$-convergence given by~\eqref{eq:sophisticated_assumption}. It is of course not entirely surprising that a more advanced estimator requires stronger conditions. {This is also explored in the numerical example of Section~\ref{sec:num}.}

\section{Extensions}\label{sec:ext}

{
In this section, we further delineate on research directions of both theoretical and practical interest. We first consider the rates of convergence of the proposed estimators, though this is primarily of mathematical interest, since further detailed assumptions on the asymptotic quality of expert information, which are necessary for such analyses, are unverifiable in most practical settings. This is followed by discussions on pre-specified parametric tail behavior and covariates, which are common features in actuarial science, finance, hydrology, and other areas of application (see, for instance,~\cite{Beirlant2004,Embrechts2013}).

\subsection{Rates of convergence}\label{sec:cov}
In Section~\ref{sec:expert}, we introduced the crude and expert Kaplan--Meier estimators, which under certain assumptions on the asymptotic quality of expert information, namely assumptions~\eqref{eq:crude_assumption} and~\eqref{eq:sophisticated_assumption}, can be shown to strongly and uniformly converge to the true underlying distribution. 

If the rate of convergences of Theorems~\ref{thm:cons_crude} and~\ref{thm:cons_soph} are of interest, more detailed assumptions on the expert information are required. In the following, a sketch of one possible approach for the crude expert is provided. We see along the lines of the proof of Lemma~\ref{lem:crude_expert} that obtaining a weak convergence theorem for the process
\begin{align*}
 \left\{ n^{1/2}\left(\amsmathbb{H}_{\dagger,1}^{(n)}(t) - H_1(t)\right) \right\}_{0\le t\le\theta}
\end{align*}
is possible under the following assumption of a faster disappearance of the corresponding bias term:
\begin{align*}
\sup_{0 \leq t < \infty} \bigg|
n^{-1/2}
\sum_{i=1}^n \amsmathbb{E}\big[\mathds{1}_{\{W_i \leq t\}} \eta_i^{(n)}\big] - n^{1/2}H_1(t)\bigg| \to 0, \quad n \to \infty.
\end{align*}
A similar weak convergence result for the process $\{n^{1/2}(\amsmathbb{H}^{(n)}(t)-H(t))\}_{t\ge0}$ is standard, and it is not too difficult to see that the two processes then are asymptotically jointly Gaussian. Since both $\mathbb{\Lambda}_{\dagger}^{(n)}$ and $\amsmathbb{F}_{\dagger}^{(n)}$ can be written as Hadamard-differentiable operators of the pair $(\amsmathbb{H}_{\dagger,1}^{(n)},\amsmathbb{H}^{(n)})$, we then obtain from the functional delta method (confer with Theorem~20.8 in~\cite{Vaart1998}) weak convergence of the following processes:
\begin{align*}
\left\{n^{1/2}\left(\mathbb{\Lambda}_{\dagger}^{(n)}(t) - \Lambda(t)\right)\right\}_{0\le t\le\theta},\quad \left\{n^{1/2}\left(\amsmathbb{F}_\dagger^{(n)}(t) - F(t)\right) \right\}_{0\le t\le\theta},
\end{align*}
the limits being Gaussian. In other words, we may, under a suitable assumption on the asymptotic quality of expert information, show that the rate of convergence for the crude expert estimator is of order $O_{\text{P}}(n^{-1/2})$.

On the other hand, for the sophisticated expert, we may not directly make use of the functional delta method, the main difficulty being that the kernel functions $K_i$ are specific to the individual, so we cannot write $\amsmathbb{F}_{\star}^{(n)}$ as a functional of simpler components. The same difficulty prevents us from tackling the problem directly through the theory of Kaplan--Meier integrals, which would be the case if all the kernel functions were identical, say $K_i = K$, see also~\cite{Stute1995}. 

Finally, we would like to remark that the rate of convergence for the sophisticated expert estimator will depend on some bandwidth sequence $(a_n)$, since we should require the kernel functions to degenerate into Dirac measures in the limit at a certain rate. From~\cite{Dabrowska1987}, we rationalize that the rate is likely of order $O_{\text{P}}\big((na_n)^{-1/2}\big)$. Further investigation is outside of the scope of this paper but an interesting avenue for further research.}

\subsection{Towards semi-parametric estimation}

For certain applications, fully non-parametric procedures can have some shortcomings. For instance, the tail behavior of $\amsmathbb{F}_{\star}^{(n)}$ will be determined by that of the kernels $K_i^{(n)}$, which have been constructed for the related but different task of producing an expert judgment on $X_i$. Perhaps more prominently, the tail behavior of $\amsmathbb{F}_{\dagger}^{(n)}$ is left unspecified, as is already the case for the usual Kaplan--Meier estimator. We now aim to amend our estimators to allow for parametric approximations.

To set some notation, we assume that
\begin{align*}
F\in\mathcal{D}=\{F_\theta,\, \theta \in \Theta\},
\end{align*}
with $\Theta$ being a non-empty parameter space. Let $F_1$ and $F_2$ be two absolutely continuous distribution functions on $[0,\infty)$, such that we may define their Kullback--Leibler divergence by
\begin{align*}
D_{\text{KL}}(F_1||F_2)=\int_{[0,\infty)} \log(f_1(t)/f_2(t)) \, F_1(\mathrm{d}t).
\end{align*}
If $F_1\in \mathcal{D}$, say $F_1=F_{\theta_1}$, we may actually recover its parameter $\theta_1$ by minimizing
\begin{align*}
\theta_1=\arg\min_{\theta\in\Theta}D_{\text{KL}}(F_1||F_\theta)=\arg\max_{\theta\in\Theta}\int_{[0,\infty)} \log{\big(f_\theta(t)\big)} \, F_1(\mathrm{d}t).
\end{align*}
The last expression is particularly useful in practice since it allows us to make the following colloquial remark.
\begin{remark}
Let $\amsmathbb{X}^{(n)}$ denote the empirical distribution function of a fully observed sample $X_1, \ldots, X_n$. Then 
\begin{align*}
\argmin_{\theta\in\Theta}D_{\text{KL}}(\amsmathbb{X}^{(n)}||F_\theta)
&\overset{\ast}{=}
\argmax_{\theta\in\Theta}\int_{[0,\infty)} \log\big(f_\theta(t)\big) \, \amsmathbb{X}^{(n)}(\mathrm{d}t)\ \\
&=
\argmax_{\theta\in\Theta}\sum \log\big(f_\theta(X_i)\big),
\end{align*}
such that the minimizer of the Kullback--Leibler divergence between $\amsmathbb{X}^{(n)}$ and $\mathcal{D}$ is simply the maximum likelihood estimator of the sample. Notice that we have placed an asterix above the first equality. This is due to the fact that $\amsmathbb{X}_n$ does not admit a density with respect to the Lebesgue measure. However, the latter fact is inconsequential for the minimization problem itself, so we may study the minimization $D_{\text{KL}}(\amsmathbb{X}_n||F_\theta)$ in such an extended sense.
\end{remark}
Returning to our setting, the basic idea in order to incorporate expert judgment into the estimation procedure is to first capture the information by constructing either $\amsmathbb{F}_{\dagger}^{(n)}$ or $\amsmathbb{F}_{\star}^{(n)}$ and then to minimize the Kullback--Leibler divergence between such an estimator and $\mathcal{D}$. In case the estimator does not possess a density, an extended minimization problem is to be understood as in the above remark. Thus, we may define
\begin{align*}
\theta_\dagger&=\arg\min_{\theta\in\Theta}D_{\text{KL}}(\amsmathbb{F}_{\dagger}^{(n)}||F_\theta)\\
&=\arg\max_{\theta\in\Theta}\int \log\big(f_\theta(t)\big) \, \amsmathbb{F}_{\dagger}^{(n)}(\mathrm{d}t)\\
&=\arg\max_{\theta\in\Theta}\frac{1}{n}  \sum_{i=1}^n\log\big(f_\theta(W_i)\big) \frac{ \eta_i^{(n)}}{1 - \amsmathbb{L}'\big(W_i-;(W_i,1-\eta_i^{(n)})_{i=1}^n\big)}
\end{align*}
and, similarly,
\begin{align*}
\theta_\star&=\arg\min_{\theta\in\Theta}D_{\text{KL}}(\amsmathbb{F}_{\star}^{(n)}||F_\theta)\\
&=\arg\max_{\theta\in\Theta}\int \log\big(f_\theta(t)\big)\,\amsmathbb{F}_{\star}^{(n)}(\mathrm{d}t)\\
&=\arg\max_{\theta\in\Theta} \frac{1}{n} \sum_{i=1}^n {\int_{[W_i,\infty)}} \log\big(f_\theta(t)\big) \,K_i^{(n)}(\mathrm{d}t) \frac{\delta_i}{1-\amsmathbb{G}^{(n)}(W_i-)}. 
\end{align*}
{For integrands not depending on $n$, these types of estimators are well-understood and have been studied in various settings, see for instance~\cite{Zhou1992,StuteWang1993,Stute1999,Lopez2009}. More recently, and in connection with expert information, the theory of informed censoring outlined in~\cite{Bladt2022} directly builds on the theory of M-type estimators of~\cite{Vaart1998} to offer a parametric solution.} Adapting the aforementioned results to encompass arrays of M-type estimators with random measures remains a promising line of further research.

\begin{example}
Let $\mathcal{D}$ consist of Exponential laws with densities
\begin{align*}
f_\lambda(t)=\lambda \exp(-\lambda t),\quad t,\lambda>0.
\end{align*}
The crude semi-parametric expert estimator is obtained as
\begin{align*}
&\lambda_\dagger\\
&=
\left\{\sum_{i=1}^n 
\frac{ \eta_i^{(n)}}{1 - \amsmathbb{L}'\big(W_i-;(W_i,1-\eta_i^{(n)})_{i=1}^n\big)}
\right\}
\Big/\left\{\sum_{i=1}^n
\frac{W_i \eta_i^{(n)}}{1 - \amsmathbb{L}'\big(W_i-;(W_i,1-\eta_i^{(n)})_{i=1}^n\big)}
\right\}\\
&=\left\{n(1-\amsmathbb{F}_\dagger^{(n)}(W_{n:n}))\right\}\Big/\left\{\sum_{i=1}^n
\frac{W_i \eta_i^{(n)}}{1 - \amsmathbb{L}'\big(W_i-;(W_i,1-\eta_i^{(n)})_{i=1}^n\big)}
\right\}\!.
\end{align*}
Now let $k_i^{(n)}$ be a left-truncated (at $W_i$) Gamma density with parameters $\alpha_i^{(n)}$ and $\beta_i^{(n)}$. Then we obtain
\begin{align*}
\int_{W_i}^\infty \log(f_\lambda(t))k_i^{(n)}(t)\,\mathrm{d} t&=
\int_{W_i}^\infty [\log(\lambda)-\lambda t] \frac{{\beta_i^{(n)}}^{\alpha_i^{(n)}} {t}^{\alpha_i^{(n)}-1} \exp(-\beta_i^{(n)}t)  }{\overline\gamma(\alpha_i^{(n)},\beta_i^{(n)}W_i)}\,\mathrm{d} t\\
&=\log(\lambda)-\frac{\lambda}{\beta_i^{(n)}}\frac{\overline\gamma(\alpha_i^{(n)}+1,\beta_i^{(n)}W_i)}{\overline\gamma(\alpha_i^{(n)},\beta_i^{(n)}W_i)},
\end{align*}
where $\overline{\gamma}(s,x)=\int_x^\infty t^{s-1}e^{-t}\mathrm{d}t$ is the upper incomplete gamma function. It then follows that the sophisticated semi-parametric expert estimator is given by
\begin{align*}
\lambda_\star&= \left\{\sum_{i=1}^n\frac{\delta_i}{1-\amsmathbb{G}^{(n)}(W_i-)}\right\}
\Big/\left\{\sum_{i=1}^n
\frac{\delta_i}{1-\amsmathbb{G}^{(n)}(W_i-)}
\frac{\overline\gamma(\alpha_i^{(n)}+1,\beta_i^{(n)}W_i)}{\beta_i^{(n)}\overline\gamma(\alpha_i^{(n)},\beta_i^{(n)}W_i)}\right\}\\
&=\left\{n(1-\amsmathbb{F}^{(n)}(W_{n:n}))\right\}
\Big/\left\{\sum_{i=1}^n
\frac{\delta_i}{1-\amsmathbb{G}^{(n)}(W_i-)}
\frac{\overline\gamma(\alpha_i^{(n)}+1,\beta_i^{(n)}W_i)}{\beta_i^{(n)}\overline\gamma(\alpha_i^{(n)},\beta_i^{(n)}W_i)}\right\}\!.
\end{align*}
\end{example}

\begin{example}
Let $\mathcal{D}$ consist of Pareto laws with densities
$$f_\alpha(t)=\frac\alpha\sigma (t/\sigma)^{-\alpha-1},\quad t>\sigma>0,\quad\alpha>0.$$
Suppose first that $\sigma$ is known. The crude semi-parametric expert estimator of $\alpha$ is obtained as
\begin{align*}
\alpha_\dagger=
\left\{n(1-\amsmathbb{F}_\dagger^{(n)}(W_{n:n}))\right\}
\Big/\left\{\sum_{i=1}^n
\frac{\log(W_i/\sigma) \eta_i^{(n)}}{1 - \amsmathbb{L}'\big(W_i-;(W_i,1-\eta_i^{(n)})_{i=1}^n\big)}
\right\}\!.
\end{align*}
Similarly, for the sophisticated expert we obtain
\begin{align*}
\alpha_\star= \left\{n(1-\amsmathbb{F}^{(n)}(W_{n:n}))\right\}
\Big/\left\{\sum_{i=1}^n
\frac{\delta_i}{1-\amsmathbb{G}^{(n)}(W_i-)}
\int_{W_i}^\infty \log(t/\sigma) \, K_i^{(n)}(\mathrm{d}t) \right\}\!.
\end{align*}
In the context of extreme value theory, one may replace the lower limit $\sigma$ with a suitable large order statistic and consider observations above such a threshold. We then obtain Hill-type estimators (confer with~\cite{Hill1975}) of the form
\begin{align*}
\alpha^k_\dagger&=
\left\{n(1-\amsmathbb{F}_\dagger^{(n)}(W_{n-k:n}))\right\}
\Big/\left\{\sum_{i=n-k+1}^n
\frac{\log(W_{i:n}/W_{n-k:n}) \eta_{i:n}^{(n)}}{1 - \amsmathbb{L}'\big(W_{i:n}-;(W_i,1-\eta_i^{(n)})_{i=1}^n\big)}
\right\}\\
\alpha^k_\star&= \left\{n(1-\amsmathbb{F}^{(n)}(W_{n-k:n}))\right\}
\Big/ \\ &\quad\left\{\sum_{i=n-k+1}^n
\frac{\delta_i}{1-\amsmathbb{G}^{(n)}(W_{i:n}-)}
\int_{W_{i:n}}^\infty \log(t/W_{n-k:n}) \, K_{i:n}^{(n)}(\mathrm{d}t) \right\},
\end{align*}
for $k=1,\dots,n-1$. The selection of the optimal $k$, as well as the derivation of the asymptotic distribution of the second extreme value estimator, is no trivial task and, although highly interesting, is not pursued here. In the absence of contamination, the first estimator has been well-studied in the literature, see for instance~\cite{WormsWorms2014}. The second, in the absence of contamination, is a new smoothed version of the estimator in~\cite{WormsWorms2014}.
\end{example}

In essence, as the above examples show, the semi-parametric approach can yield quantities which generalize or smooth well-known estimators, making it a valuable method beyond our fully non-parametric approach.

\subsection{Covariates}\label{sec:cov}

For many practical applications, the inclusion of covariates is of great importance. We should therefore like to briefly address this matter in relation to the work presented here.

First, it should be noted that the expert judgments and beliefs may actually depend on any covariates since the expert can use different distributions between claims. This, in particular, allows for the use of internal covariates, such as -- in the context of disability insurance -- whether claim closure is the result of either preliminary recovery or instead death; claims that are closed due to death typically do not reopen and are thus rarely falsely closed.

Second, in regards to the distribution of interest, namely the distribution of the size or duration of an insurance claim $X$, we disregard the effect of covariates. To include covariates, one starting point could be to develop conditional expert Kaplan--Meier estimators by including expert judgments and beliefs in the ordinary conditional Kaplan--Meier estimators. The latter estimators were introduced in~\cite{Beran1981} and further studied in for instance~\cite{Dabrowska1989}. 

\section{Numerical study}\label{sec:num}

In this concluding section, we illustrate the applicability and practical connotations of expert Kaplan--Meier estimators. We focus on applications to disability insurance, using both simulated as well as real data from a French insurer. The former is the focal point of Subsection~\ref{sec:sim}, while Subsection~\ref{sec:real} is devoted to the latter. 

\subsection{Simulated dataset}\label{sec:sim}

This section is a study based on simulated real-world scenarios from the health and disability sector and serves a dual purpose. First, the study validates the consistency and overall quality of the estimators presented and, secondly, we have selected a realistic situation and parameters, so practical applications could follow closely from this study. En passant, we suggest multiple ways in which expert information may be formed. Due to the obviously sensitive nature of expert information on the lives and health of policyholders, to the best of the authors' knowledge, this component is not publicly available.

We are interested in learning the recovery rates for policyholders claiming disability benefits. In the analysis, we take as the basic time dimension the duration since onset of disability, so that staggered entry results in random right-censoring. The selection of appropriate time scales is outside the scope of this study, but further details may for instance be found in~\cite{AndersenBorganGillKeiding1993}. It is worthwhile to remark that we carry out the analysis for parameters that would correspond to policyholders of a fixed age of $40$ years at the onset of disability. This fixing of age will also be encountered in the real-world dataset below, and though the introduction of age as a covariate is of interest, it is not englobed by our models and thus not studied further; confer also with Subsection~\ref{sec:cov}.

Both contamination and right-censoring are assumed to be entirely random, that is, $X$, $Y$, and $C$ are mutually independent. We moreover introduce some further nomenclature: the hazard rates for $X$ and $Y$ are denoted by $\mu_{01}$ and $\mu_{02}$, respectively. Thus, the hazard rate for $X \wedge Y$ is simply $\mu_{01} + \mu_{02}$; see Figure~\ref{fig:multiple_cause} for a depiction of this construction as a multi-state model.

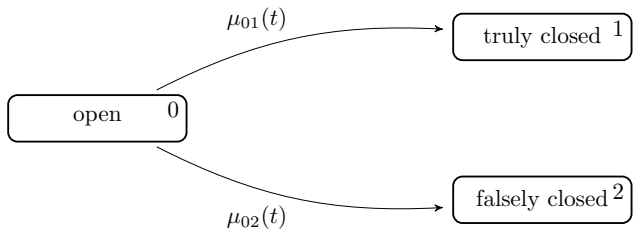
\begin{figure}[h]
	\centering
	\scalebox{0.88}{
	\begin{tikzpicture}[node distance=2em and 0em]
		\node[punktl, draw = none] (2) {};
		\node[punktl, left = 40mm of 2] (0) {open};
		\node[anchor=north east, at=(0.north east)]{$0$};
		\node[punktl, below = 5mm of 2] (3) {falsely closed};
		\node[anchor=north east, at=(3.north east)]{$2$};
		\node[punktl, above = 5mm of 2] (1) {truly closed};
		\node[anchor=north east, at=(1.north east)]{$1$};
	\path
		(0)	edge [pil, bend left = 15]		node [above left]		{$\mu_{01}(t)$}				(1)
		(0)	edge [pil, bend right = 15]		node [below left]		{$\mu_{02}(t)$}				(3)
	;
	\end{tikzpicture}}
	\caption{The multi-state model underlying the simulated disability duration dataset.}
	\label{fig:multiple_cause}
\end{figure}

As previously mentioned, we intend to keep parameters as realistic as possible. For this reason, we choose $C \sim \mbox{Uniform}([0,20])$, which is compatible with the characteristics of an insurer that has been in business for about two decades and has experienced limited fluctuations in the number of disability claims. Furthermore, we select the following hazard rates for $X$ and $Y$:
\begin{align*}
[0,20] \ni t &\mapsto \mu_{01}(t)+\mu_{02}(t)=\exp(0.1-1.5\,t), \\
[0,20] \ni t &\mapsto \mu_{02}(t)=  \exp(0.1-1.5\,t)/8 + \exp(0.1-2.5\,t)/4,
\end{align*}
see also the bottom left panel of Figure~\ref{fig:new_sim_data} for a plot of the hazard rates. From our experience, this specification produces data that is comparable to practical actuarial situations.

With $p$ as in~\eqref{eq:p_function}, it holds for $w \in [0,20]$ that
\begin{align*}
p(w) = \frac{\mu_{01}(w)}{\mu_{01}(w)+\mu_{02}(w)},
\end{align*}
which is depicted in the bottom right panel of Figure~\ref{fig:new_sim_data}. In terms of the multi-state model, this essentially yields a conditional mark distribution.

\begin{figure}[!htbp]
\centering
\includegraphics[width=0.44\textwidth]{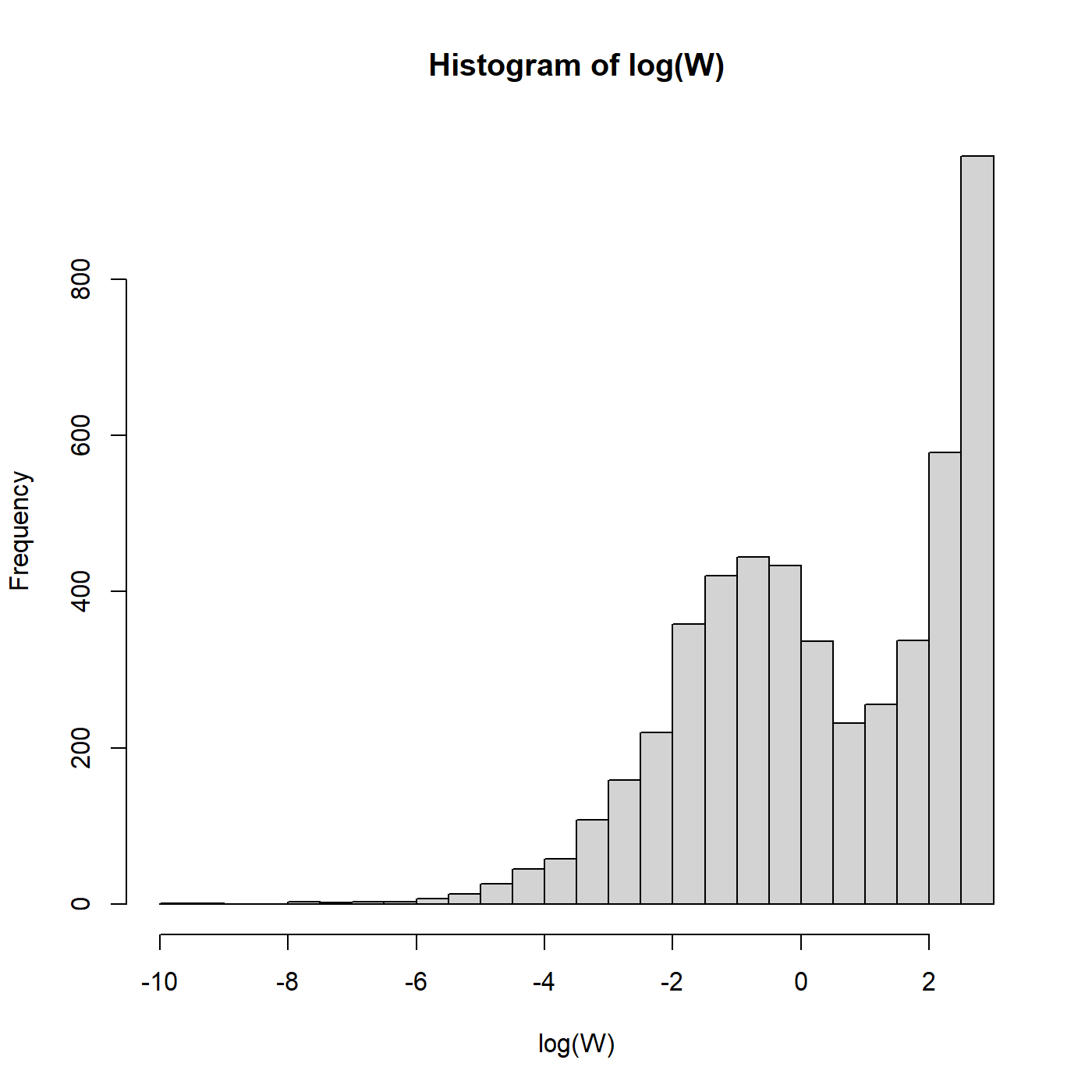}
\includegraphics[width=0.44\textwidth,trim= 0in -1.4in 0in 0in,clip]{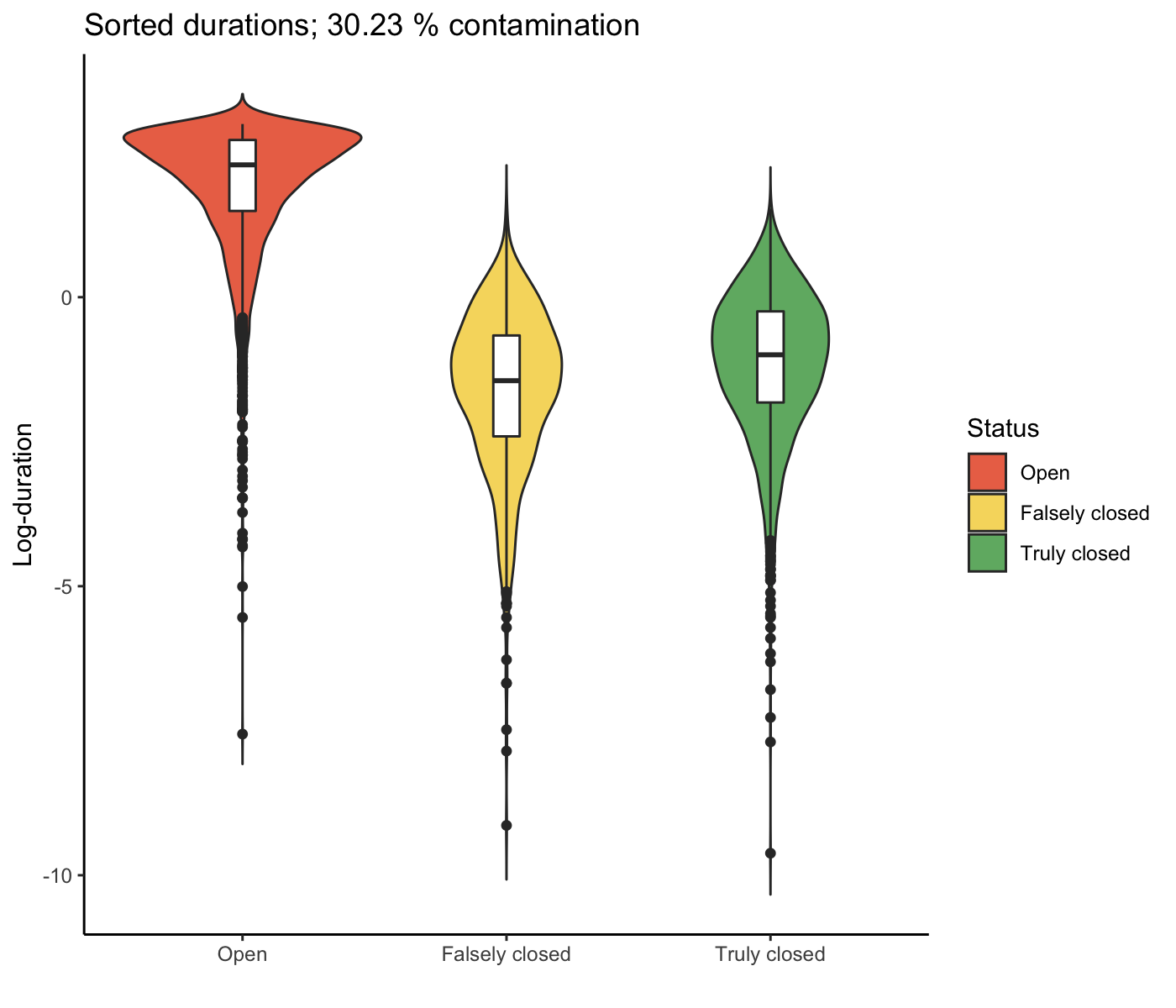}
\includegraphics[width=0.44\textwidth]{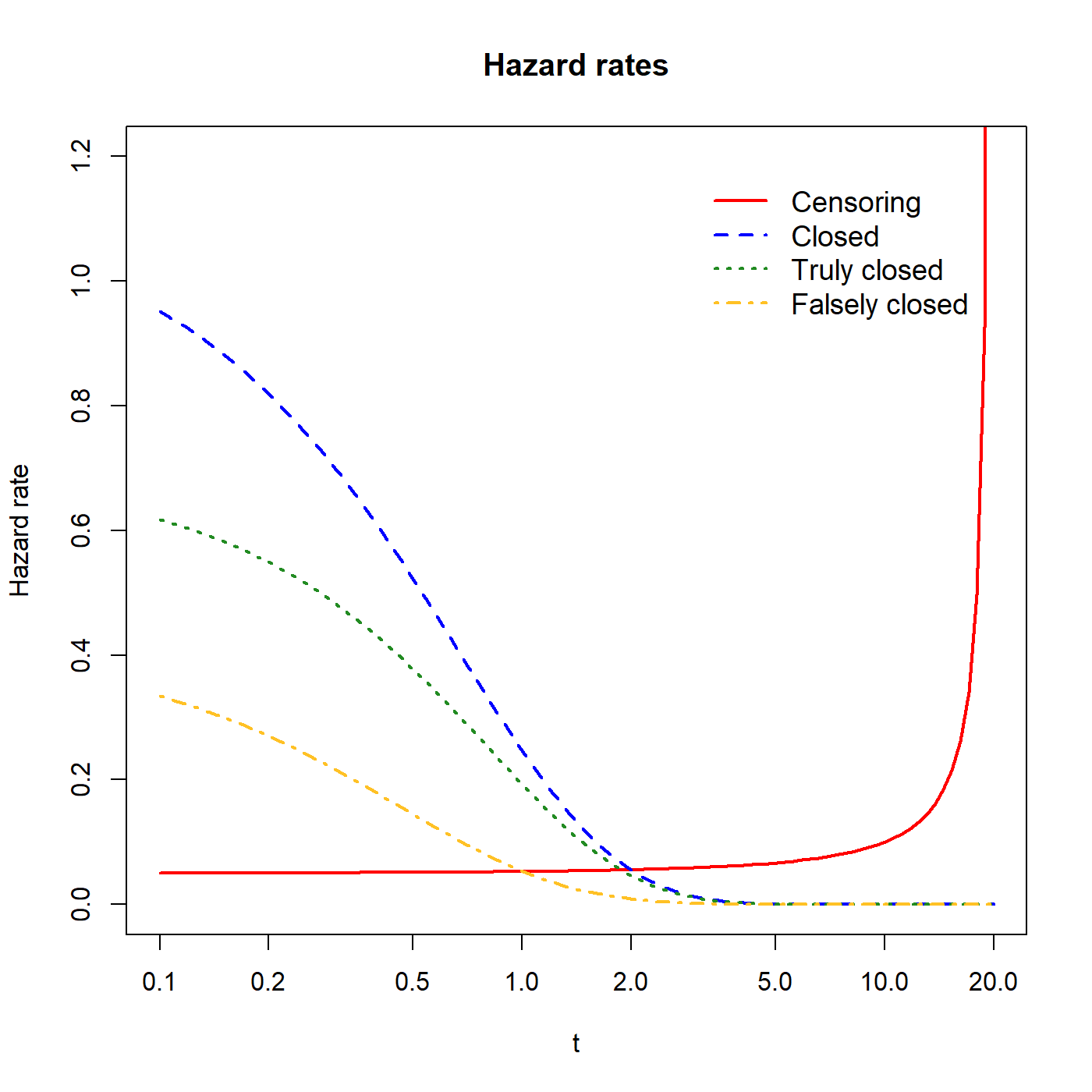}
\includegraphics[width=0.44\textwidth]{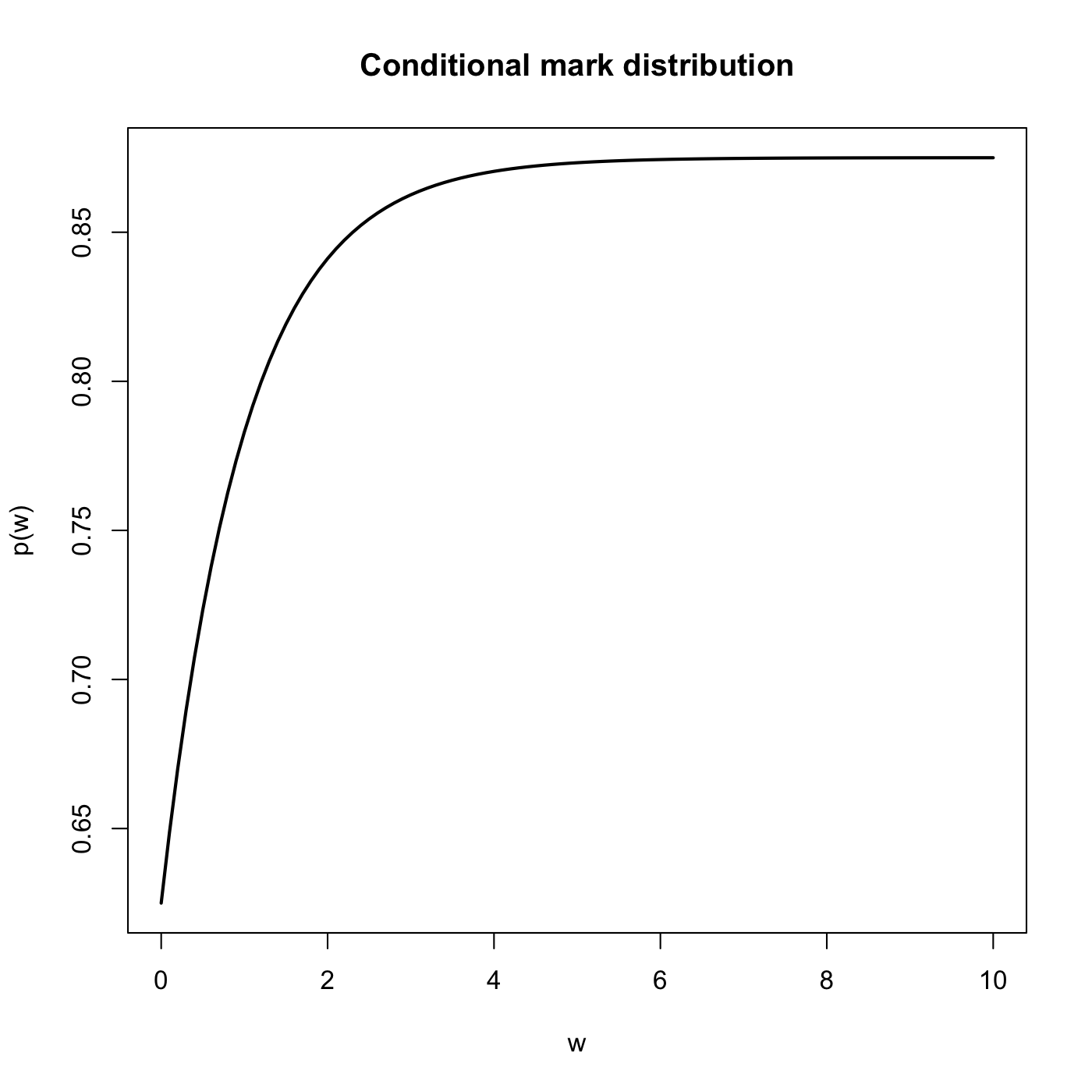}
\caption{The simulated disability duration dataset. Top panels: histogram of log-durations (left) and a violin plot of log-durations colored according to claim status (right). Bottom panels: the hazard rates of $X$, $Y$, and $C$ (left) and the conditional mark distribution $w\mapsto p(w)$ (right).
}
\label{fig:new_sim_data}
\end{figure}

We simulate $5{,}000$ iid pairs $(W_k,\delta_k)_{k=1}^{5000}$ according to the above specification and obtain that roughly $30.23\,\%$ of the observations are contaminated. It should be noted that the proportion of contamination is not computable from the observed data. The top two panels in Figure~\ref{fig:new_sim_data} help visualize the data.

Regarding the generation of expert information, we consider two scenarios for each of the two experts (crude and sophisticated). In the following, we describe the procedure in detail. Note that key components are depicted it in Figure~\ref{fig:new_sim_expert_info}.

For the crude expert, we consider two cases both generated using the same mechanism but using different parameters. We simulate iid random variables $(B_k)_{k=1}^{5000}$ according to
\begin{align*}
B_k \, | \, W_k \sim \mbox{Bernoulli}\big( p_0\cdot p(W_k)+(1-p_0)\big)
\end{align*}
and set $\eta_k=\delta_k\cdot B_k$ with $p_0 = 0.75$ for the first crude expert and $p_0=0.95$ for the second crude expert. {We stress that in our theoretical framework, the expert only provides the judgments $(\eta_k)_{k=1}^{5000}$ and no information whatsoever about how these judgements are formed, such as the expert's specification (or estimation) of $p_0$ and $p$.} Notice that the effectiveness of the expert does not require them to have access to the contamination mechanism directly, but rather indirectly through the knowledge of $w \mapsto p(w)$. Taking the conditional expectation of $B_k$, we may define the \textit{expert weighted conditional mark function} as
\begin{align}\label{eq:cond_meanmark}
(w,p_0) \mapsto \amsmathbb{E}[B_1 \, | \, W_1 = w]=p_0\cdot p(w) + (1-p_0).
\end{align}
{The interpretation is that} the expert effectiveness $p_0$ bridges between perfect information corresponding to knowledge of the conditional mark distribution $w \mapsto p(w)$ and no information corresponding to $w \mapsto 1$. The top left panel of Figure~\ref{fig:new_sim_expert_info} shows the contours of~\eqref{eq:cond_meanmark}, while the top right panel shows the functions for the selected expert effectivenesses $p_0=0.75, 0.95$.

For the sophisticated expert, we assume some stronger form of information, namely access to noisy proxies for the unobservable $(X_k)_{k=1}^{5000}$. Concretely, for the first expert, we impose {truncated} Gaussian kernels with mean $X_k+V^{(1)}_k$ and standard deviation $X_k+V^{(2)}_k$, where $V^{(1)}_k\stackrel{iid}{\sim}\Gamma(1,1)$ and  $V^{(2)}_k\stackrel{iid}{\sim}\Gamma(1,10)$. The second expert is specified similarly but the noise random variables have distributions $\Gamma(10,10)$ and $\Gamma(1,100)$, respectively. The resulting {means and variances of the kernels} are given in the bottom panel of Figure~\ref{fig:new_sim_expert_info}.

\begin{figure}[!htbp]
\centering
\includegraphics[width=0.88\textwidth]{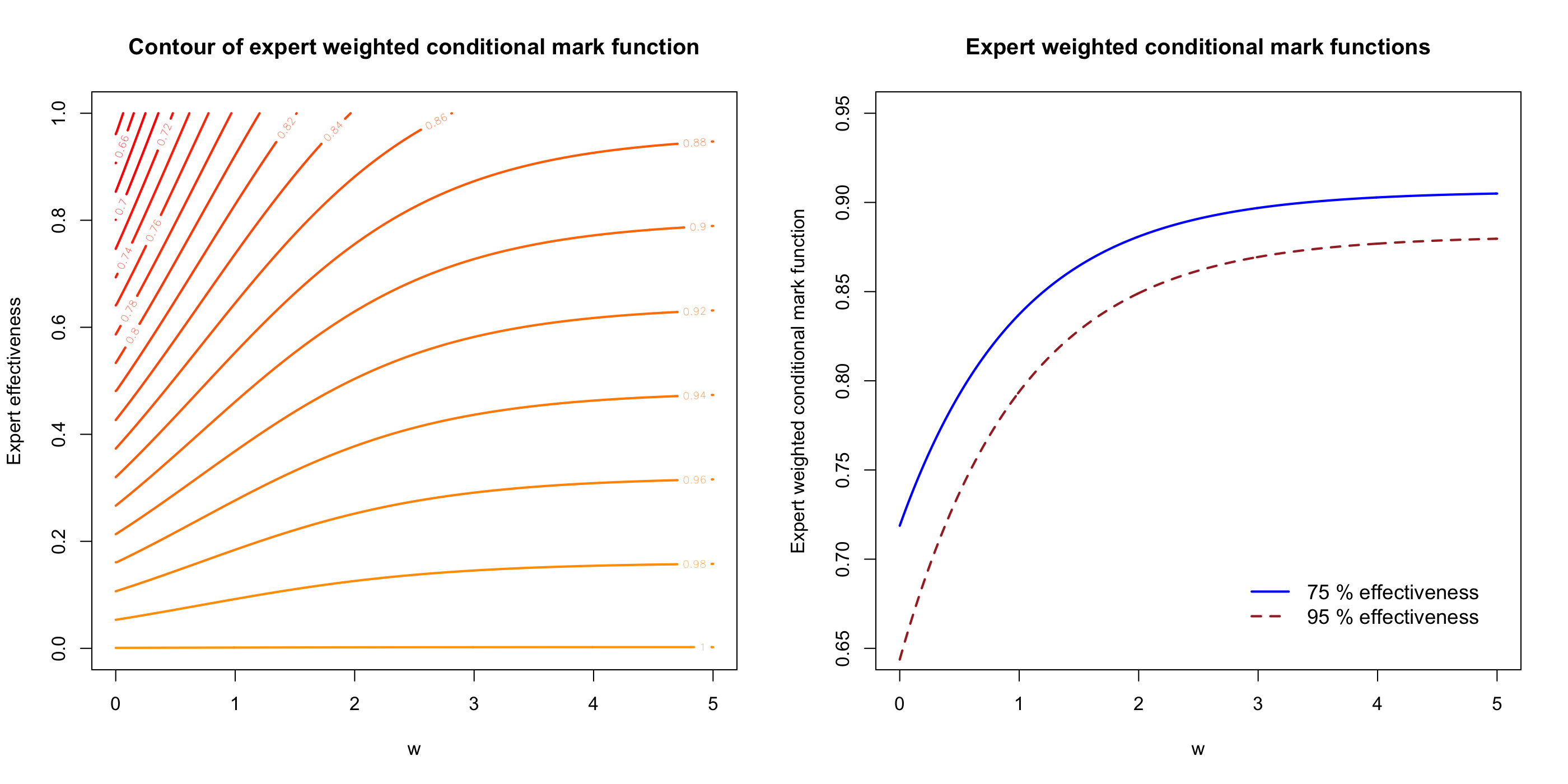}
\includegraphics[width=0.44\textwidth]{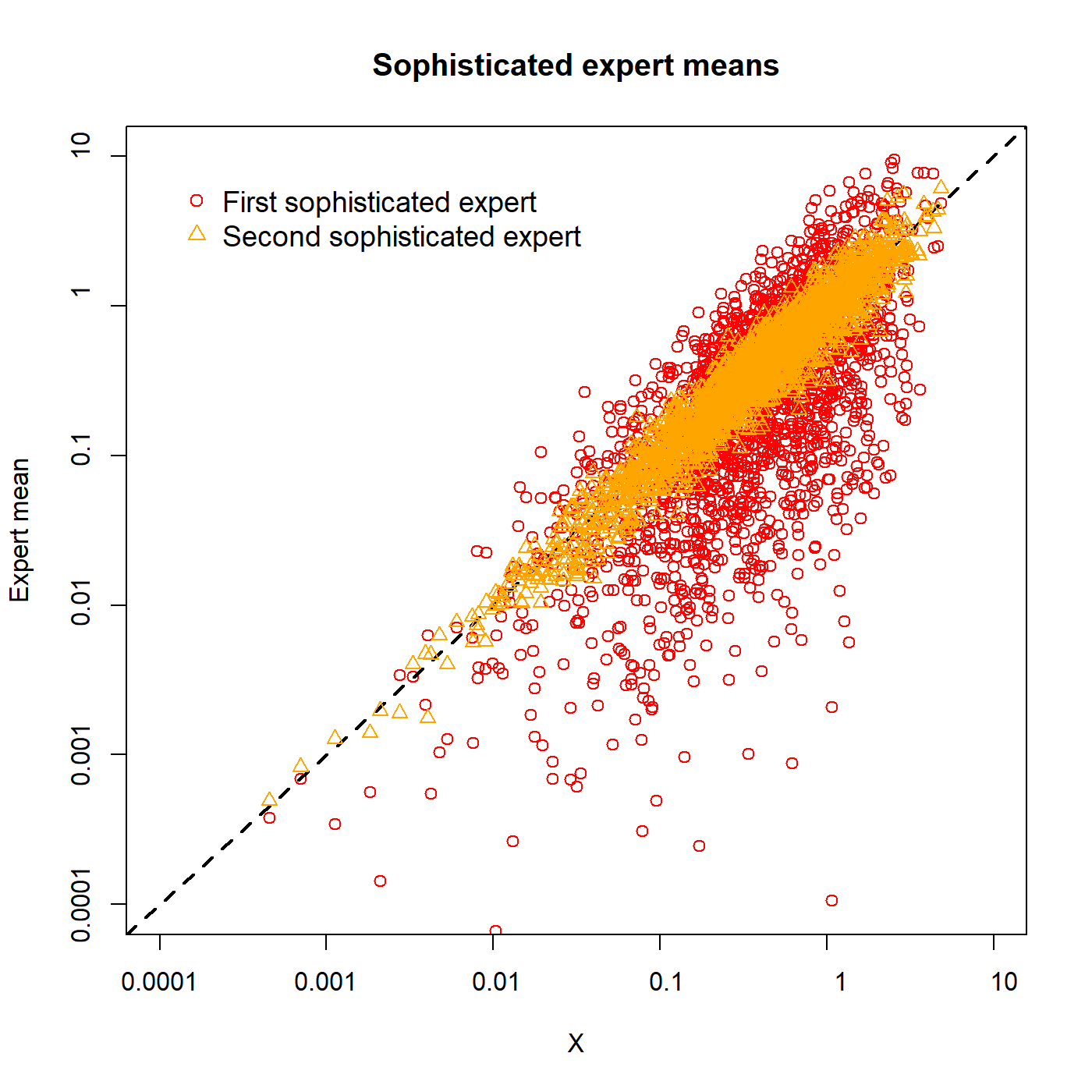}
\includegraphics[width=0.44\textwidth]{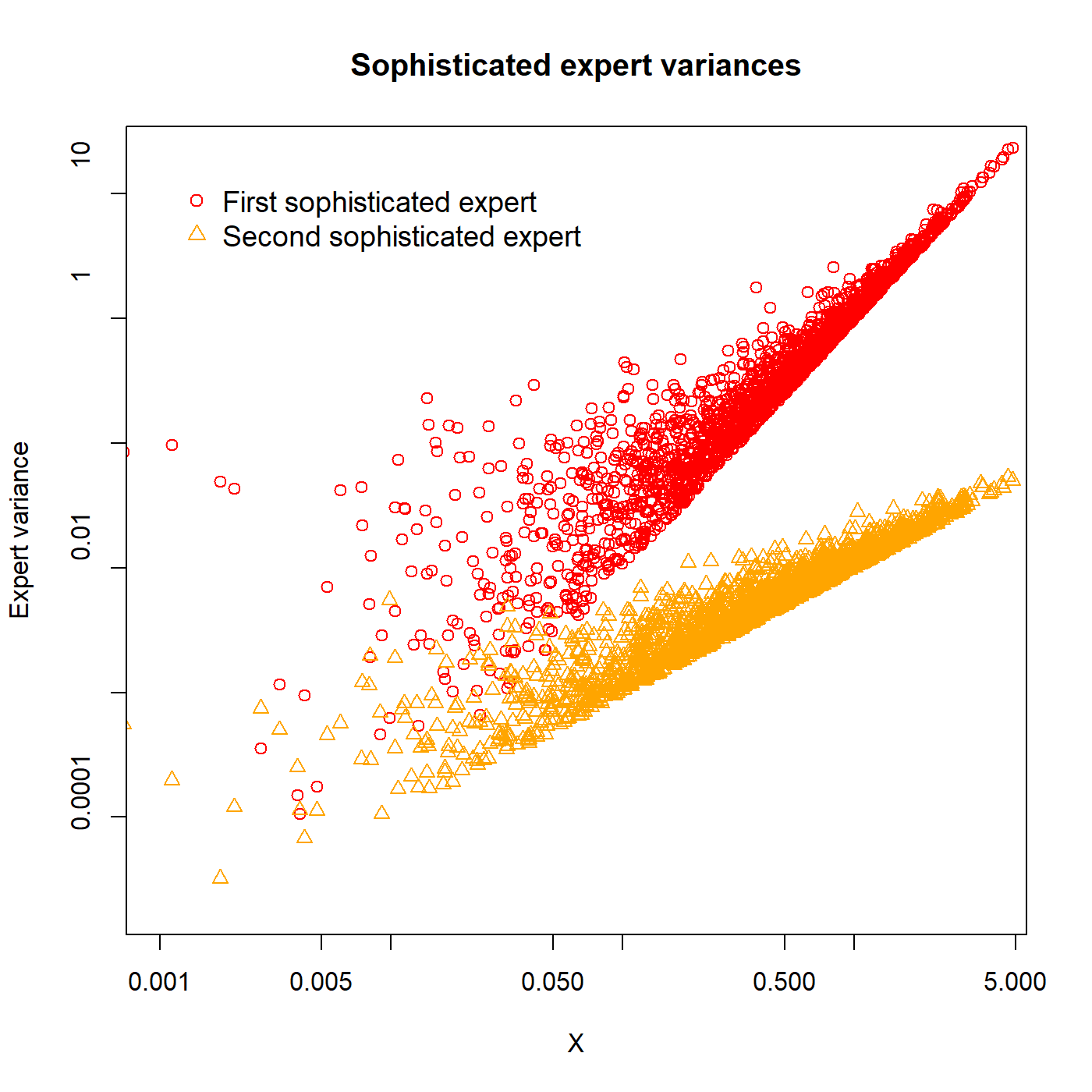}
\caption{Top left panel: contours of the expert weighted conditional mark function. Top right panel: expert weighted conditional mark function for the two experts considered in the simulation study. Bottom panels: mean and variance scatter plots for the kernels associated to the sophisticated experts.
}
\label{fig:new_sim_expert_info}
\end{figure}

The results of an analysis using our expert estimators are provided in Figure~\ref{new_sim_survival_curves}. The first, and anticipated, observation is that the usual product-limit estimator grossly underestimates the survival curve under contamination. Perhaps more subtly, but still in line with the theory, is that -- for our choice of parameters -- improved access to sophisticated expert information leads to much more sudden convergence to the true distribution, while improved access to a good estimate of $w \mapsto p(w)$ appears to have a more mellow and steady effect. Note that for very large durations, which are maybe of lesser importance for our application (but could be of interest in other domains){,} the sophisticated survival curves seem to wean off sharply as a consequence of the choice of kernel. Kernel choices (and their calibration) to capture appropriate tail behaviors are a subject of further research. {It should be stressed that we have only established strong consistency on compacts and, consequently, we cannot at this point guarantee convergence on the `maximal interval'.}

\begin{figure}[!htbp]
\centering
\includegraphics[width=0.8\textwidth]{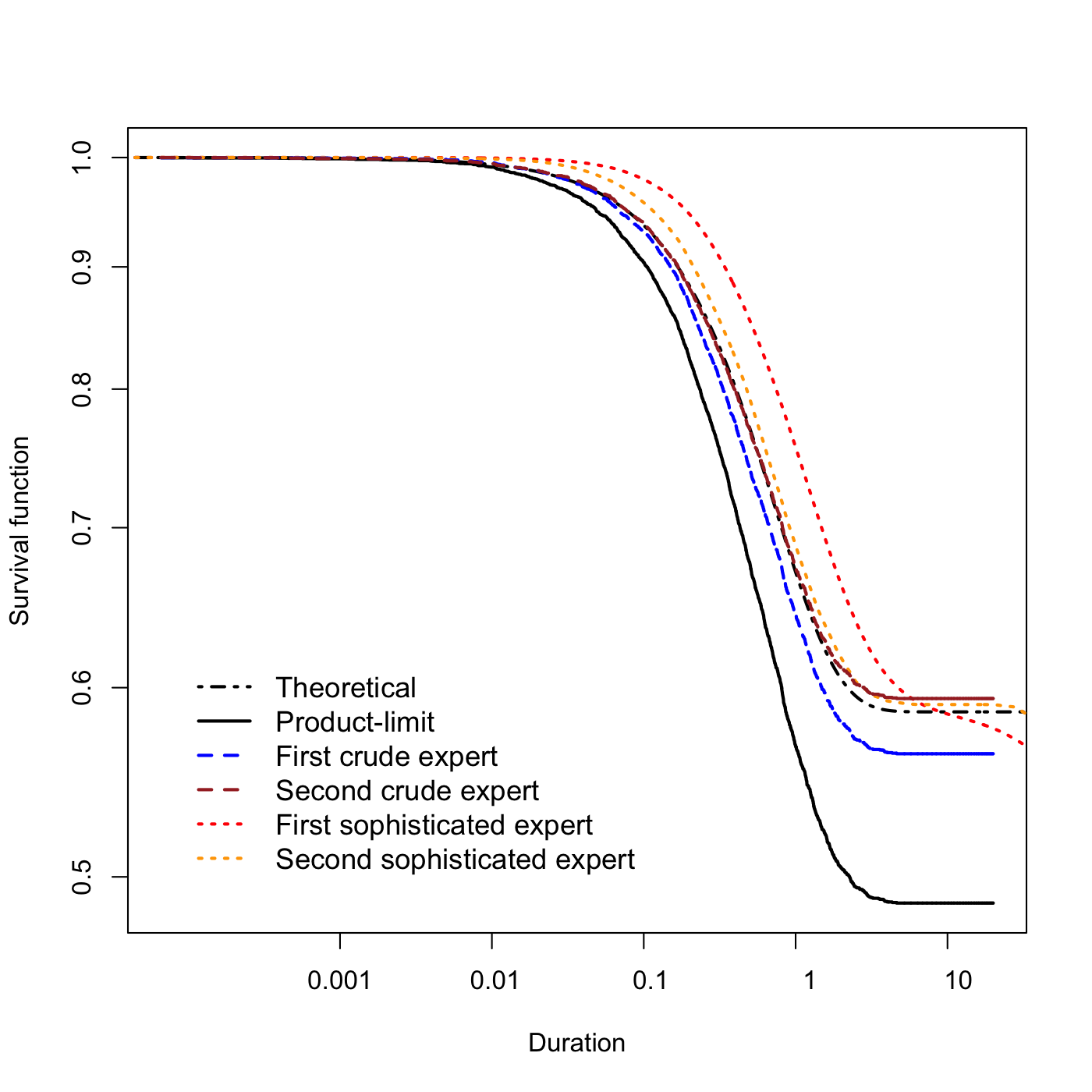}
\caption{Survival curves for the various expert estimators for the simulated disability duration dataset.
}
\label{new_sim_survival_curves}
\end{figure}

\pagebreak

\subsection{Real dataset} \label{sec:real}

We now consider the French temporary disability insurance dataset \texttt{freptftempdis} which may be found in the \texttt{R} package \texttt{CASdatasets}. We focus on the duration of disability for individuals with both temporary and permanent disability awards, with permanent disability counted as right-censoring at the latest observation time, which is ultimo $2018$. Furthermore, we look only at entries from $2015$ and only at entry ages between $55$ and $60$, so as to have a homogeneous population. Extensions including covariate information would allow modeling the entire dataset. Thus, we are left with a sample of $62{,}384$ individuals with roughly $13.7\,\%$ being right-censored data points. Figure~\ref{fig:real_data} shows a histogram of the durations, as well as the location of the right-censored observations within the entire sample; right-censoring is observed to be more common for larger durations.

\begin{figure}[!htbp]
\centering
\includegraphics[width=0.44\textwidth]{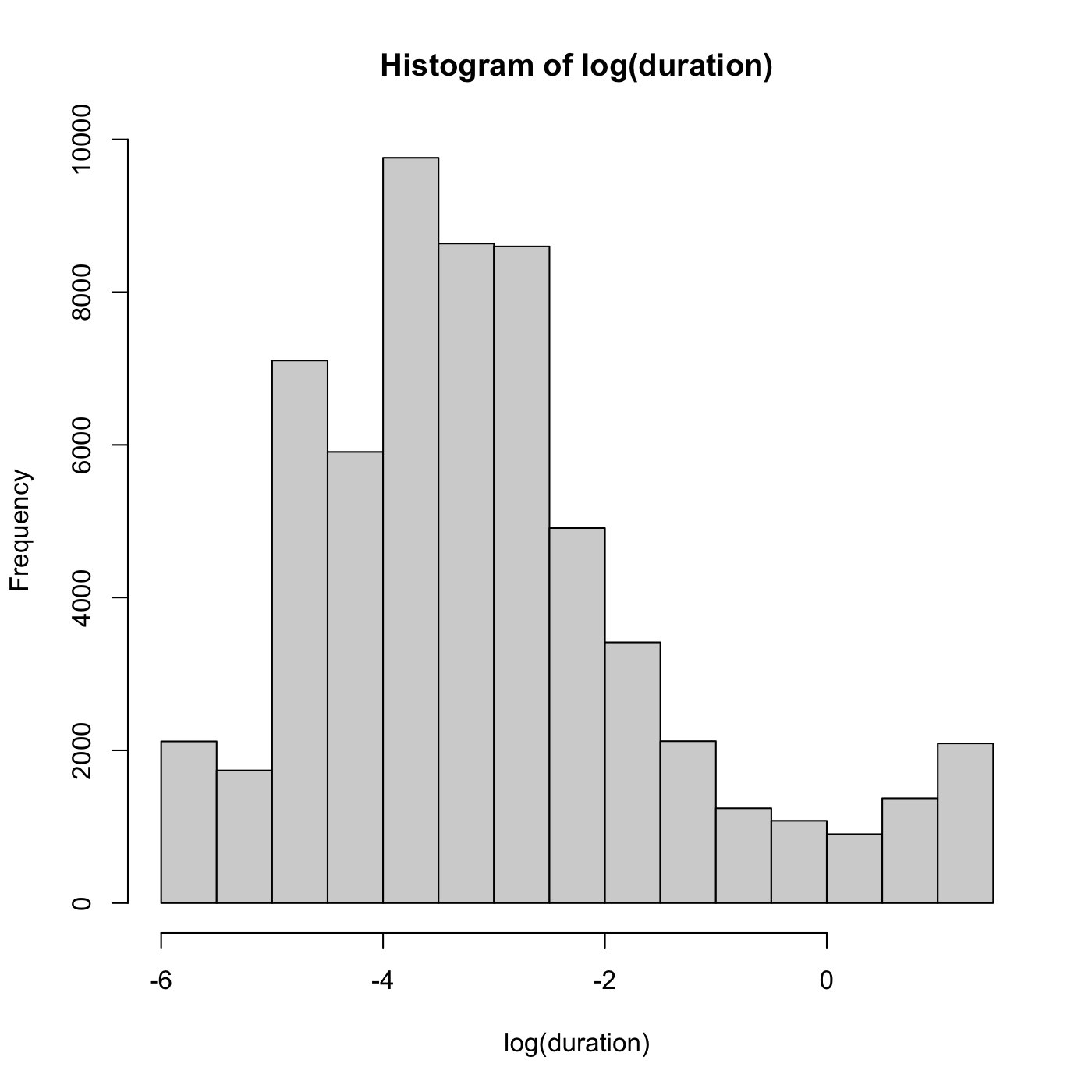}
\includegraphics[width=0.44\textwidth]{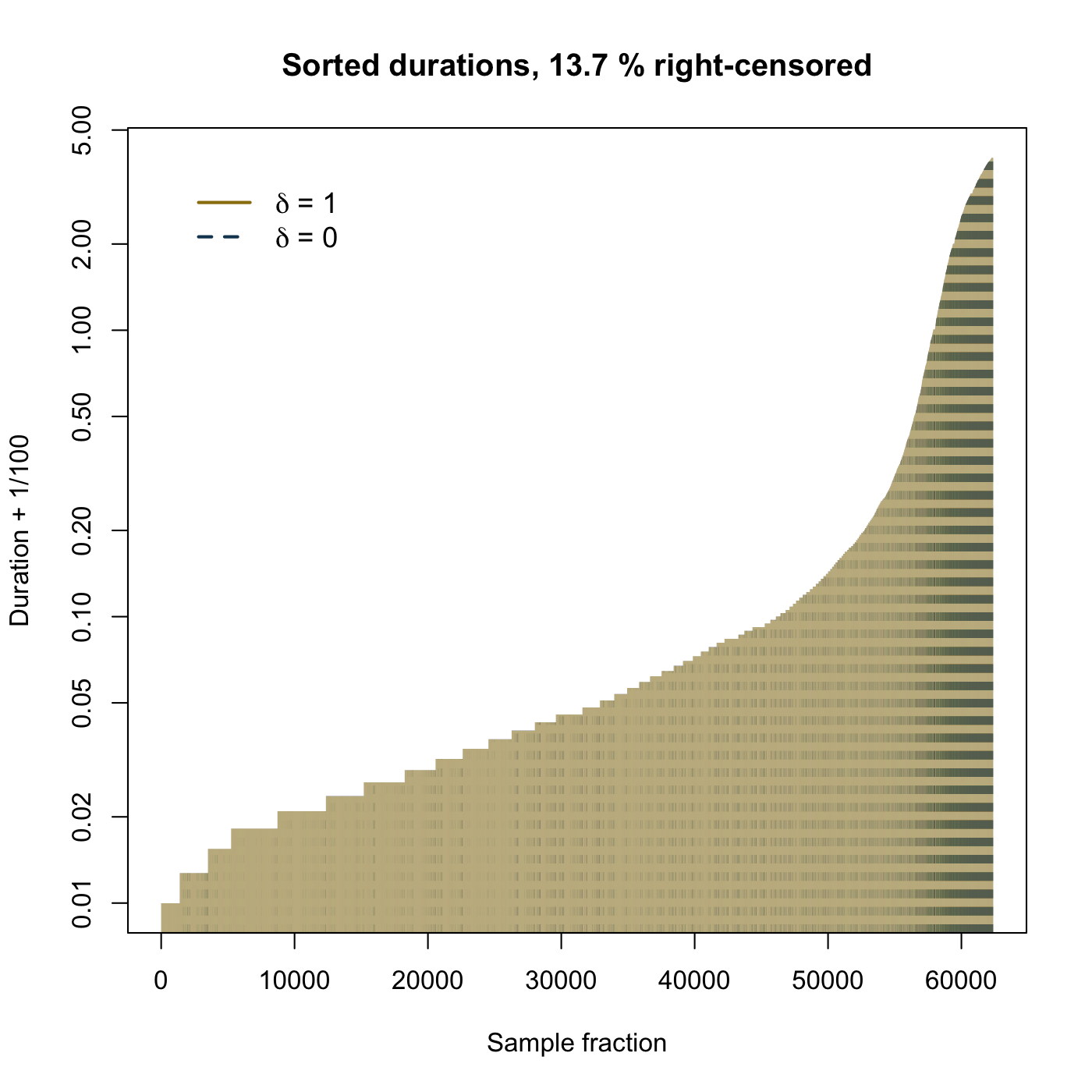}
\caption{Left panel: histogram of log-durations from the French temporary disability insurance dataset for entries in $2015$ and entry ages between $55$ and $60$. Right panel: sorted durations in log scale, with right-censored observations, corresponding to $\delta=0$, bundling up at larger sample fractions.
}
\label{fig:real_data}
\end{figure}

\begin{figure}[!htbp]
\centering
\includegraphics[width=0.88\textwidth]{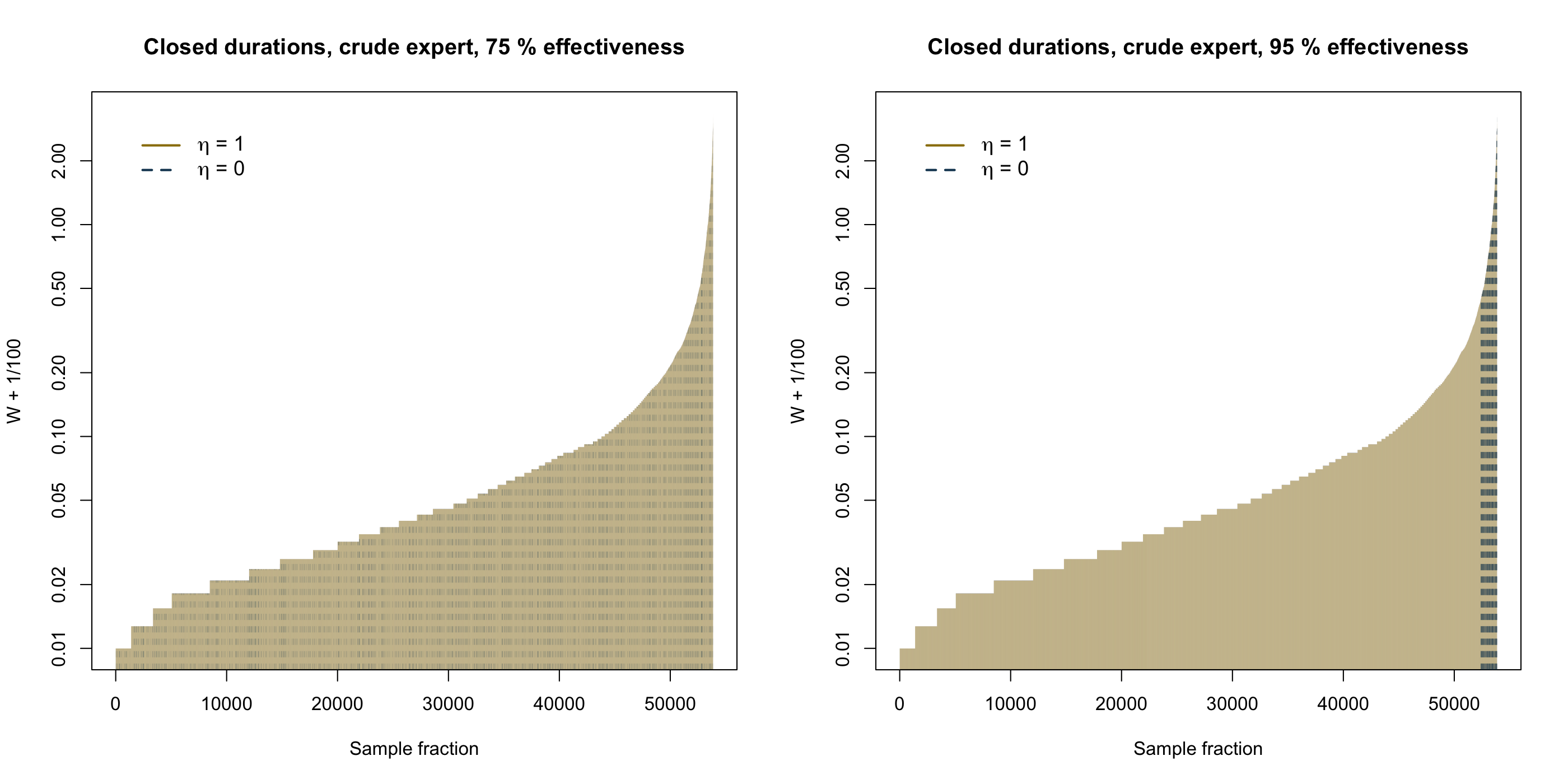}
\includegraphics[width=0.44\textwidth]{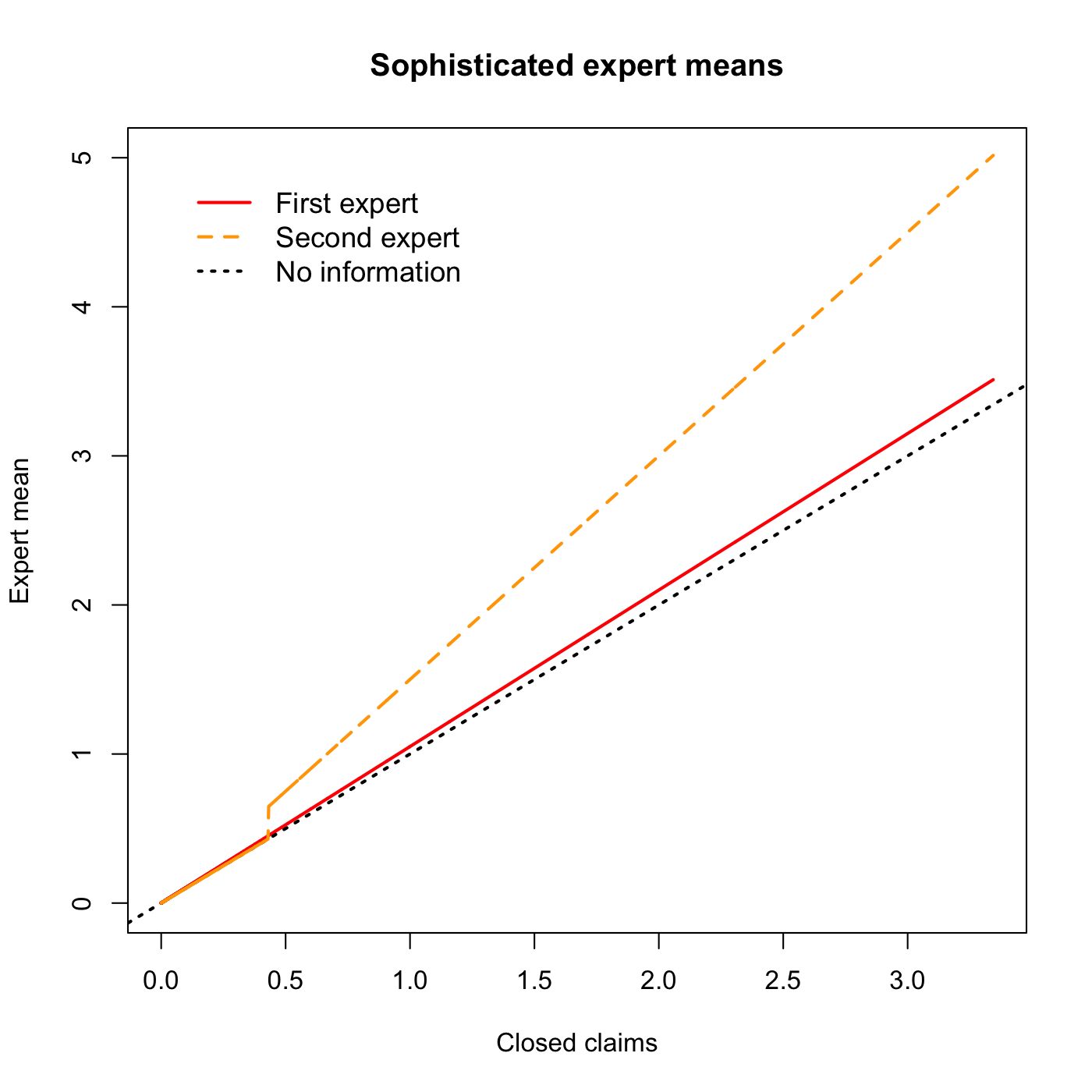}
\includegraphics[width=0.44\textwidth]{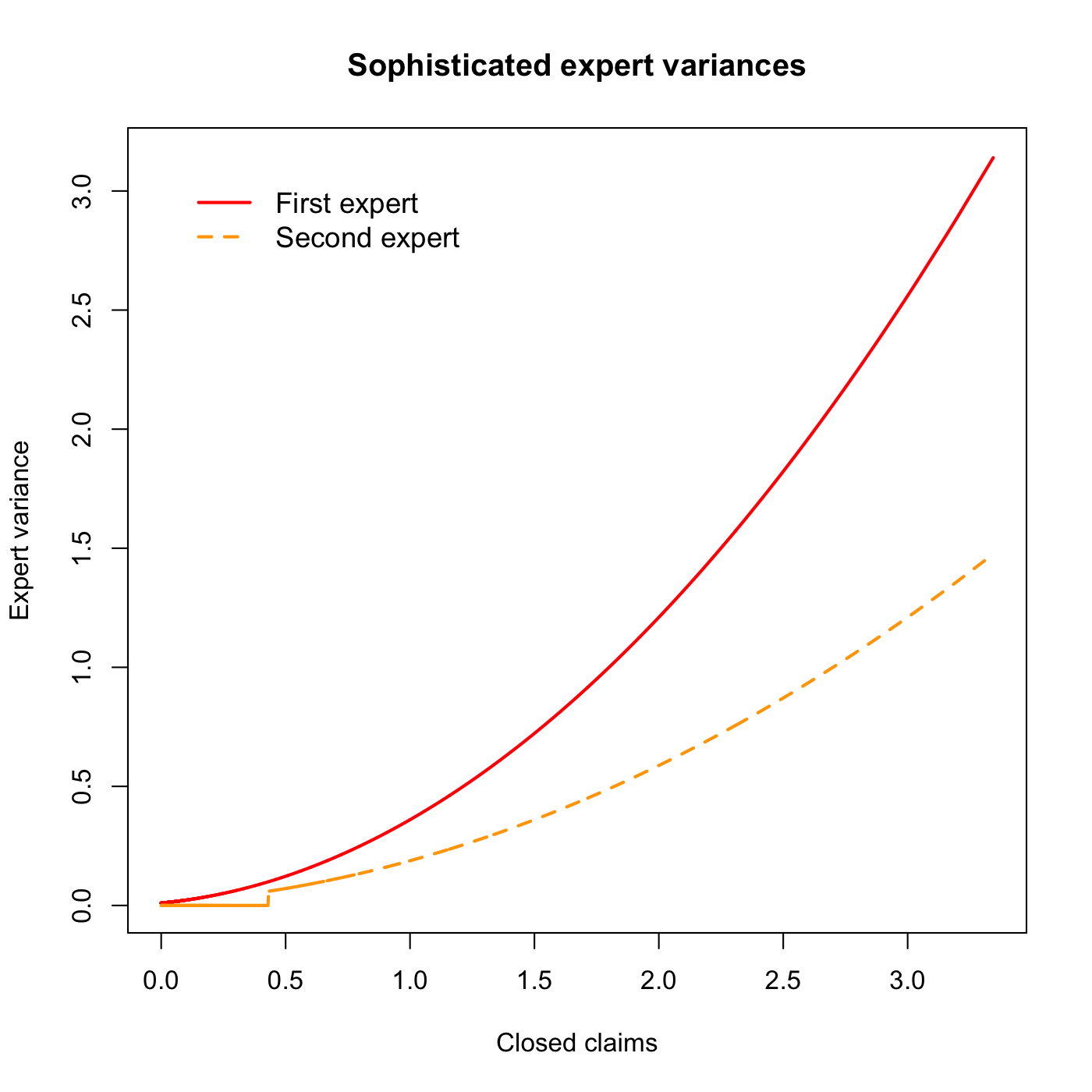}
\caption{Expert information for the French temporary disability insurance dataset. Top panels: crude expert information for closed durations. Bottom panels: sophisticated expert mean and variance specifications for closed claims.
} \label{fig:real_expert_info}
\end{figure}

\begin{figure}[!htbp]
\centering
\includegraphics[width=0.88\textwidth]{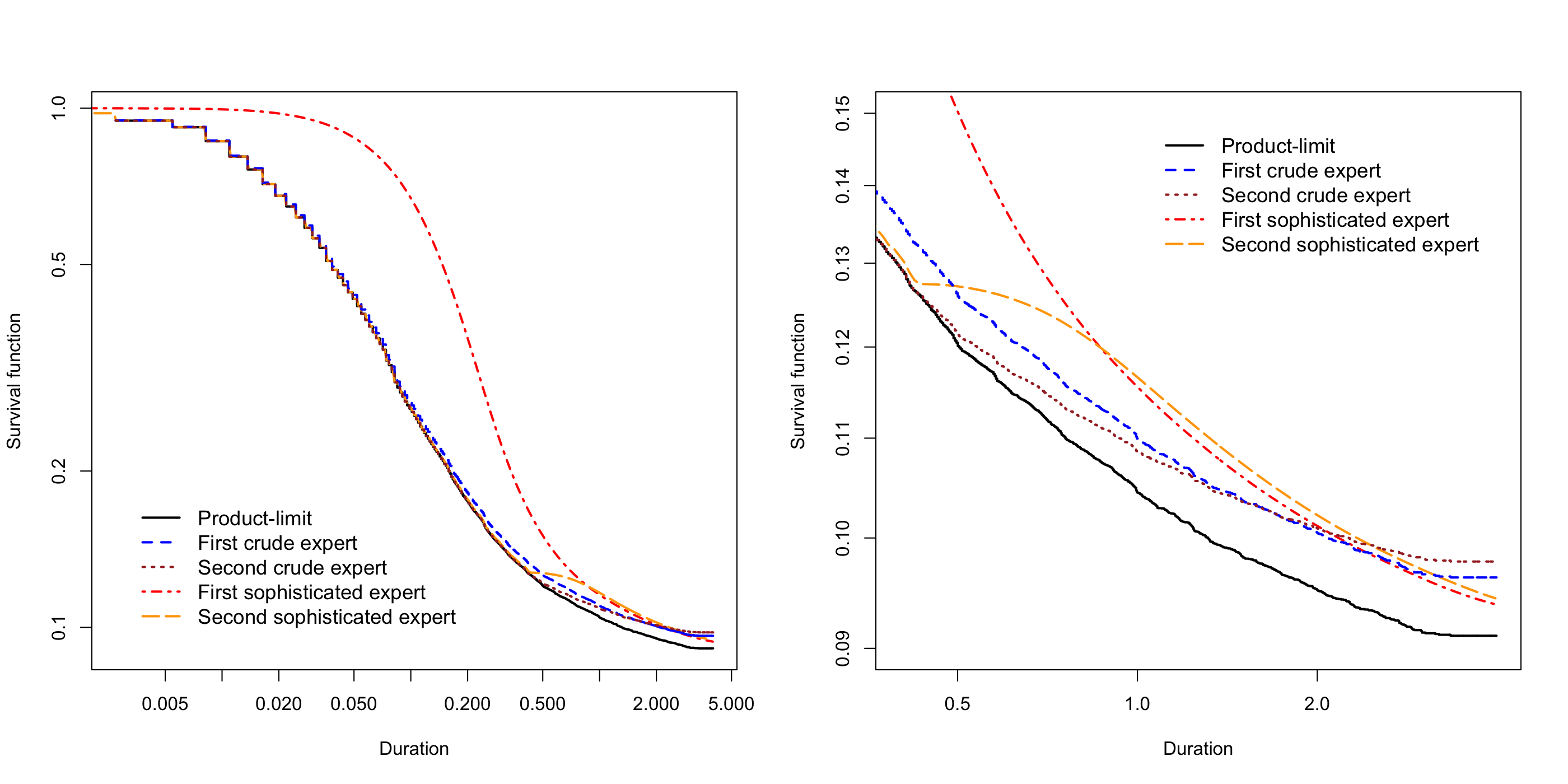}
\caption{Survival curves for the distinct expert information scenarios for the French temporary disability insurance dataset.
} \label{real_survival_curves}
\end{figure}

The next step in the analysis is to create realistic expert information scenarios to consequently see how the expert estimators behave. To this end, we distinguish between crude and sophisticated expert information. We describe below in detail the procedure we use to generate this information and summarize it in Figure~\ref{fig:real_expert_info}.

For the crude expert information, we consider two approaches. In the first one, we randomly select $2\,\%$ of the closed durations to reopen. That is we independently set $\eta_k=\delta_k \times B_k$, for $B_k\stackrel{iid}{\sim}\mbox{Bernoulli}(0.98)$. Regarding the second approach, we randomly select $20\,\%$ out of the longest $10\,\%$ of durations, that is, for all $k$ such that $W_k$ is above the empirical $90\,\%$ quantile we set $\eta_k=\delta_k \times B_k$, where $B_k\stackrel{iid}{\sim}\mbox{Bernoulli}(0.80)$. The latter approach reflects higher precision but reduced breadth of the expert. The two top panels of Figure~\ref{fig:real_expert_info} depict this construction.

Concerning the sophisticated expert information, we once again consider two cases. The first one considers {truncated Gaussian kernels with mean} $1.05\,W_k$ and with standard deviation $\frac{1}{10}+\frac{1}{2} W_k$, while the second only considers the top $10\,\%$ largest $W_k$ {and truncated Gaussian kernels with mean} $1.5\,W_k$ and with standard deviation $\frac{1}{10}+\frac{1}{3} W_k$. Here, as for the crude expert, we have again induced a precision versus breadth compromise. The two bottom panels of Figure~\ref{fig:real_expert_info} depict this construction.

From the four experts we have constructed above, we plot the associated survival estimators in Figure~\ref{real_survival_curves}. We see that although all estimators lift the tail of the distribution toward a heavier specification -- as desired -- they do so in different ways, some of which we briefly discuss. The second crude expert only provides information on the censoring indicators for large values of $W_k$, but the estimates for all quantiles is modified as a result. This feature occurs since the expert $\eta_k$ enter lower quantile estimates through the inverse probability of censoring weights. This is in contrast to the second sophisticated expert, which has a different tail only for upper quantiles and is otherwise behaving like the usual product-limit estimator for smaller quantiles. Finally, the first sophisticated expert has a particularly different (and smooth) curve from the rest of the estimators. This may be a concern for the cautious modeler; however, depending on the quality of the expert information, such a curve could actually be the least biased, and thus its unusual shape should not be an immediate deterrent to its use.

{
\section{Concluding remarks}\label{sec:con}

In this paper, we have shown that by integrating expert information into the classic Kaplan--Meier estimator, we can improve the estimation of survival curves in scenarios where some observations are not only right-censored but also subject to contamination. To be specific, we have incorporated expert judgments and beliefs in the form of binary variables or kernel specifications that lead to modified product-limit estimators. Under suitable conditions on the asymptotic quality of expert information, we have established strong consistency on compacts for these estimators. Our results have practical implications for actuaries and other practitioners working in the insurance and pensions industry, as well as in other fields where survival analysis is applied.

To keep the presentation to the point, we have not considered consistency on the `maximal interval' nor allowed for types of missingness other than right-censoring. In certain applications, observations are subject not only to right-censoring but also to left-truncation. We believe that independent left-truncation can be accommodated by properly adjusting the Turnbull estimator of~\cite{Turnbull1976} and by following the approach in~\cite{WangJewellTsai1986}. Regarding strong consistency for the tail of the distribution, a starting point could be~\cite{StuteWang1993}, where strong consistency on the `maximal interval' is established by utilizing a reverse supermartingale property of the estimator in $n$. We would like to think of this increase in mathematical sophistication as being, at least to a certain degree, unrelated to the inclusion of expert information. However, additional research would be required to confirm this claim, possibly only under additional assumptions.}

\section*{Acknowledgments}
Martin Bladt would like to acknowledge financial support from the Swiss National Science Foundation Project 200021\_191984{; most of his work was carried out while he was a postdoc at the University of Lausanne.} The collaboration between the authors was strengthened by research stays of Christian Furrer at the University of Lausanne facilitated by Hansj{\"o}rg Albrecher and partly supported by Dr.phil.\ Ragna Rask-Nielsens Grundforskningsfond. Christian Furrer has carried out this research in association with the project frame InterAct.

\bibliography{bladtfurrerexpert2022.bib}
\bibliographystyle{amsplain}

\end{document}